
\documentclass[twocolumn,10pt]{IEEEtran}
\usepackage{amsmath}
\usepackage{amssymb}
\usepackage{amsfonts}
\usepackage{epsfig}
\usepackage{cite,soul}











\def\bfh{{\boldsymbol{h}}}

\def\bfr{{\boldsymbol{r}}}

\def\bfs{{\boldsymbol{s}}}

\def\bfw{{\boldsymbol{w}}}

\def\bfx{{\boldsymbol{x}}}

\def\bfy{{\boldsymbol{y}}}


\def\bfG{{\boldsymbol{G}}}

\def\bfH{{\boldsymbol{H}}}

\def\bfM{{\boldsymbol{M}}}

\def\bfQ{{\boldsymbol{Q}}}

\def\bfR{{\boldsymbol{R}}}

\def\bfW{{\boldsymbol{W}}}

\def\bfX{{\boldsymbol{X}}}

\def\bfY{{\boldsymbol{Y}}}

\def\bfalpha{{\boldsymbol{\alpha}}}


\def\bfPi{{\boldsymbol{\Pi}}}



\newcommand{\bit}{\begin{itemize}}
\newcommand{\eit}{\end{itemize}}

\newcommand{\bc}{\begin{center}}
\newcommand{\ec}{\end{center}}

\newcommand{\ba}{\begin{array}}
\newcommand{\ea}{\end{array}}

\newcommand{\beq}{\begin{equation}}
\newcommand{\eeq}{\end{equation}}

\newcommand{\beqn}{\begin{equation*}}
\newcommand{\eeqn}{\end{equation*}}

\newcommand{\bean}{\begin{eqnarray*}}
\newcommand{\eean}{\end{eqnarray*}}
\newcommand{\bea}{\begin{eqnarray}}
\newcommand{\eea}{\end{eqnarray}}

\def\Z{\mathbb{Z}}

\def\R{\mathbb{R}}
\def\C{\mathbb{C}}
\def\E{\mathbb{E}}




\def\Im{\boldsymbol{I}}







\DeclareGraphicsExtensions{.pdf}%


\newtheorem{thm}{Theorem}

\newtheorem{cor}[thm]{Corollary}
\newtheorem{note}{Remark}

\newcommand{\mac}{{\tm{mac}}}
\newcommand{\ext}{{\tm{ext}}}
\def\snr{\textnormal{SNR}}
\newcommand{\tm}[1]{\textnormal{#1}}
\def\im{\, \imath \,}
\def\rank{\textnormal{rank}}
\def\trace{\textnormal{Tr}}
\def\CN#1#2{\C {\cal N} \left( #1, #2 \right)}
\def\R{{\mathbb R}}

\def\Z{{\mathbb Z}}

\def\E{{\mathbb E}}
\def\C{{\mathbb C}}
\renewcommand{\Re}{\textnormal{Re}}
\renewcommand{\Im}{\textnormal{Im}}

\def\norm#1{\left\| #1 \right\|}
\def\abs#1{\left| #1 \right|}
\newenvironment{proof}{\begin{IEEEproof}}{\end{IEEEproof}}
\newcommand{\qed}{\hfill\IEEEQED}

\title{Performance-Complexity Analysis for MAC ML-based Decoding with User Selection}

\author{Hsiao-feng (Francis) Lu, Petros Elia, and Arun Singh
\thanks{This paper was presented in part at IEEE VTS APWCS 2014, Ping Tung, Taiwan, August 28-29, 2014.}
}
%

\usepackage{color}

\begin{document}
\bibliographystyle{ieeetran}
\allowdisplaybreaks
\maketitle

\begin{abstract}
This work explores the rate-reliability-complexity limits of the quasi-static $K$-user \emph{multiple access channel} (MAC), with or without feedback. Using high-SNR asymptotics, the work first derives bounds on the computational resources required to achieve near-optimal (ML-based) decoding performance.  It then bounds the (reduced) complexity needed to achieve any (including suboptimal) diversity-multiplexing performance tradeoff (DMT) performance, and finally bounds the same complexity, in the presence of feedback-aided user selection. This latter effort reveals the ability of a few bits of feedback not only to improve performance, but also to reduce complexity. In this context, the analysis reveals the interesting finding that proper calibration of user selection can allow for near-optimal ML-based decoding, with complexity that need not scale exponentially in the total number of codeword bits. The derived bounds constitute the best known performance-vs-complexity behavior to date for ML-based MAC decoding, as well as a first exploration of the complexity-feedback-performance interdependencies in multiuser settings.
\end{abstract}

\section{Introduction}

\subsection{Multiple access system model} \label{sec:sys_model}

We consider a symmetric multiple access channel (MAC) with $K$ single-antenna users, communicating to a receiver with $n_r$ receiving antennas, over a quasi-static fading channel. Each user $i$, $i=1,2,\ldots,K$ communicates over the same duration of $T$ time slots, while the receiver accumulates an $n_r \times T$ signal matrix $\bfY$ taking the form
\begin{equation}
\bfY=\sqrt{\snr} \sum_{i=1}^K \bfh_i \, \bfx_i^\top +\bfW \label{eq:sys1_mac}
\end{equation}
where $\bfh_i \sim \CN{\mathbf{0}}{\boldsymbol{I}_{n_r}}$ is the $i$th user channel vector with independent identically distributed (i.i.d.) zero-mean unit-variance $\CN{0}{1}$ complex Gaussian entries, where $\bfW$ is the received noise matrix with i.i.d. $\CN{0}{1}$ entries, where $\snr$ denotes the signal to noise ratio, and where ${\bfx_i}=[x_{i,1} \cdots x_{i,T}]^\top$ is a scaled version of a $T$-length code vector of user $i$ satisfying an average power constraint $\E \norm{\bfx_i}^2 \leq T, \ i=1, \ldots, K$.

\subsubsection{Exploring the scenario of outage-limited communications with bounded computational resources}
We here consider the outage-limited MAC setting, where the channel in \eqref{eq:sys1_mac} is randomly drawn but it remains fixed throughout the coding duration of $T$ channel uses. This common assumption corresponds to scenarios where the channel changes slowly, and where communication takes place under strict latency constraints that do not allow for encoding over a large number of fading realizations.

This same outage-limited setting can often experience reduced reliability due to the event of outage where the instantaneous channel cannot support the user rates, and due to the fact that in cases like the MAC, the maximum allowable rates are diminished by interference. Moreover, to meet the strong demands for faster data rates on wireless and cellular channels, most communications systems would opt to operate closer and closer to the theoretical limits of capacity, at the further expense of reliability. This in turn, naturally forces the need for high-performance transceivers that will not further sacrifice  reliability performance, but which can often be computationally expensive, or even computationally prohibitive.

What happens though to this performance if communication takes place under strict constraints on the computational resources that can be used to encode and decode? Equivalently, one can ask, what computational resources are needed to achieve a certain rate-reliability performance. The main objective of our work is to provide some understanding of such complexity-performance interdependencies that are crucial in the MAC.

\subsubsection{Complexity-performance interdependencies in the MAC.}
Naturally there are many such interdependencies between the key parameters of SNR, rate, computational complexity and reliability. For example it is easy to imagine that, typically, decreasing computational resources can potentially increase the probability of error by placing a constraint on the coding duration $T$, by forcing the use of less complex transceivers with suboptimal performance, or even by requiring that the decoding effort be terminated early (computational halt) at the risk of additional errors. Similarly, increased user rates can mean larger and denser codes, with more errors and larger decoding algorithmic efforts. Along the same lines, fixing the rate and increasing the SNR, will make the codewords sparser and thus possibly easier to differentiate and decode.

\subsection{Complexity-performance measures and high-SNR approximations}
Let $N_{\max}$ denote the amount of computational reserves, in floating point operations (flops) per $T$ channel uses, that the system is endowed with, in the sense that after $N_{\max}$ flops, the decoder must simply terminate, potentially prematurely and before completion of the task, thus declaring an error. Also let $P_e$ be the probability of error associated to the decoder, in the presence of the aforementioned computational constraints.
Motivated by the need for reduced-complexity high-performance ML-based decoders (cf.~\cite{JitRaj13,NatSriRaj13,JalBarOtt09,HowSirCal,NatRaj13b}), we explore the properties of this quadruple $\bigl(\snr, R,P_e,N_{\max}\bigr)$, for ML-based MAC decoding.
\subsubsection{High SNR analysis of performance and complexity}
Towards making sense of the complexity-vs-performance interdependencies, we will exploit carefully-chosen asymptotic bounds, where analytical tools can rigorously and concisely approximate these interdependencies, allowing for cleaner insights that can yield impact. Such asymptotic bounds (generally mapping a random problem to its near-deterministic limits), help to identify where the largest gaps in our understanding may lie. We will here focus on \emph{high-SNR} asymptotic bounds but note that the derived approximations can offer insights in operational regimes of moderate-to-high SNR values.
Such emphasis on the moderate-to-high SNR values, can better capture the core of the complexity-vs-performance problem because such operational regions can eventually support higher rates, corresponding to larger codebooks that are typically --- but not always, as we will surprisingly see later on --- harder to decode.

Towards this, we let each user's rate scale as $R=r \log_2 \snr$, where $r$ denotes the \emph{multiplexing gain} \cite{ZheTse} describing the density of the constellation, and serving as a measure of how far each user's rate is from the capacity of an AWGN channel. Each user $i$ employs a code ${\cal X}_{r,i}$ of cardinality $|{\cal X}_{r,i}| = 2^{RT} = \snr^{rT}$, and the receiver uses a decoder $\mathcal{D}_r$, which is considered to make an error if any of the $K$ users' messages is not decoded correctly. We restrict our attention to the class of lattice codes and joint ML decoders, which we describe in Section~\ref{sec:latticeEncoderDecoder} that follows.

\subsubsection{Complexity and reliability exponents}
For complexity analysis we adopt the high-SNR approach in \cite{SinghEJ12,JaldenE12}, where for some $r$, for some encoders $\{\mathcal{X}_{r,i}\}_{i=1}^K$, and for a decoder $\mathcal{D}_r$, the \emph{complexity exponent} $c(r)$ was defined to be
\begin{equation}
c(r) : =   \lim_{\snr \to \infty} \frac{\log N_{\max}(r)}{\log \snr} \label{eq:complexityExponent}.
\end{equation}
We also adopt the well-known DMT approach of \cite{ZheTse} where, in the same high SNR regime, the \emph{diversity gain} $d(r)$ takes the form
\begin{equation}
d(r):= - \lim_{\snr \to \infty} \frac{\log P_e(r)}{\log \snr}. \label{eq:DMT1}
\end{equation}
Using this asymptotic approach, the work in \cite{TseVisZhe} has shown that the optimal $K$-user MAC DMT performance, takes the form
\begin{equation}
d^*_{\tm{mac}}(r) := \begin{cases}
            n_r(1-r), &  \tm{ if } 0 \leq r \leq  \frac{n_r}{K+1},\\
            d_{K,n_r}^*(Kr), & \tm{ if } \frac{n_r}{K+1} < r \leq \frac{n_r}{K},
\end{cases}
\label{eq:dmt_mac_user}
\end{equation}
where $d_{m,n}^*(r)$ denotes the optimal DMT of an $(m \times n)$ MIMO channel (cf.\cite{ZheTse} for its exact characterization).
Our first effort here will be to bound the complexity exponent $c(r)$ that can guarantee any MAC-DMT  performance $d(r)\leq d^*_{\tm{mac}}(r)$.
Before we do that, let us try to get some insight on these two competing exponents.

\subsubsection{Insight on $d(r)$ and $c(r)$}
To gain some insight, we note that in terms of the MAC-DMT performance, the above expression in~\eqref{eq:dmt_mac_user} reflects the existence of two distinct $r$ regions; the \emph{lightly-loaded} multiplexing gain region $0 \leq r \le  \frac{n_r}{K+1}$ where the MAC exhibits single-user behavior as if there were no multiuser interference, and the \emph{heavily-loaded} region $\frac{n_r}{K+1} < r \leq \frac{n_r}{K}$, where the MAC exhibits an antenna-pooling behavior.

In terms of the complexity exponent $c(r)$, it is easy to see that in this MAC setting, there is a total of $2^{RKT} = \snr^{rKT}$ codewords (jointly from all users), which means that a \emph{brute-force} joint ML decoder would always make $2^{RKT}$ codeword visits~\footnote{The number of codeword visits is ---  in the high SNR setting --- in the same order as the number of computational flops.}. Thus in terms of exponents, this means that such a brute-force optimal ML decoder, can achieve the optimal $d^*_{\tm{mac}}(r)$ with a required complexity exponent $c(r) = rKT$ that is indeed the maximum (meaningful) complexity exponent in this setting\footnote{Considering decoders with higher complexity than brute-force ML, is unnecessary because ML decoders are already optimal.}. We will show that a properly designed sphere decoder can achieve this same ML performance, with a reduced complexity exponent $c(r) < rKT$. To give the reader an idea, we can recall from the work in \cite{JaldenE12} that, if for example, there is only one user  ($K=1$) who has $n_t$ transmit antennas, and the receiver has $n_r\geq n_t$ antennas, then  --- again in the high SNR limit ---  this same ML performance can be achieved with computational resources that are substantially smaller than brute force, and which corresponded to an (effective) complexity exponent that was a non-monotonic\footnote{This non monotonic expression is maximum somewhere in the mid-ranges of $r$, suggesting that the computational resources to achieve the asymptotically optimal complexity need not necessarily increase with increasing rate. For example, for the simple case of $n_t = 2$, in the presence of the minimum required $T= n_t=2$ in order to achieve the optimal DMT of $d_{2,n_r}^*(r)$, the complexity exponent was increasing as $c(r) = r$ for $0\leq r\leq 1$, and was then decreasing as $c(r) = 2-r$ for $1\leq r \leq 2$.}  piece-wise linear function in $r$, of the form
\[
c(r) = r T \left(1- \frac{r}{n_t} \right), \ \ r =0,1,2,\ldots,n_t.
\]
We will do something similar here, for the multiuser case $K\geq 1$, and we will also explore the effect of a few bits of feedback on the $d(r)$ and $c(r)$.

\subsection{Notation and assumptions}
Following \cite{ZheTse}, we use $\doteq$ to denote the \emph{exponential equality}, i.e., a function $f(\snr)$ is said to be $f(\snr)\doteq \snr^{b}$ if and only if $\lim_{\snr \to \infty} \frac{\log f(\snr)}{\log \snr}=b$. Exponential inequalities such as $ \dot\leq $, $ \dot\geq $ are similarly defined. By $s=\lceil x\rceil$ we mean the smallest integer $s \geq x$, and by $t=\lfloor x\rfloor$ we mean the largest integer $t \leq x$. Capital boldface letters are reserved for matrices, and the lower-case boldface ones are for column vectors. ${\bf A}^\dag$ is the Hermitian transpose of matrix $\bf A$, and  $(x)^+ := \max\{x,0\}$. Finally, we define $\nu:=\min\{K,n_r\}$.

In terms of assumptions, we consider fading coefficients that have a circularly symmetric complex-normal distribution and which are i.i.d. in space. As stated, the coding duration is smaller than the coherence interval of the channel, and hence the fading is randomly drawn, but it is held fixed throughout the communication process. We will assume that the receiver has full channel state information, i.e., knows completely the channel vectors $\bfh_i$ of every user $i$, while the users have no knowledge of $\bfh_i$ in the case of no feedback. Furthermore, the rate of communication $R$ is kept constant and thus does not change as a function of the channel and of the feedback. Finally, different assumptions regarding the structure of the code and decoder, will be presented immediately below in Section~\ref{sec:latticeEncoderDecoder}. Assumptions on the feedback-aided user-selection algorithm will be presented in Sec. \ref{sec:FB1}.

\subsection{Lattice encoders and joint decoder\label{sec:latticeEncoderDecoder}}

Directly from \eqref{eq:sys1_mac} we have the joint real-valued vectorized model which takes the form
\begin{equation}
\bfy= \sqrt{\snr} \bfH \bfx + \bfw
\label{eq:vect}
\end{equation}
where
\begin{equation}
\bfH = \boldsymbol{I}_T \otimes \left[
\begin{array}{rr}
\Re\left\{ \bfH_{\tm{\!eq}}\right\}& -\Im\left\{ \bfH_{\tm{\!eq}}\right\}\\
\Im\left\{ \bfH_{\tm{\!eq}}\right\} & \Re\left\{ \bfH_{\tm{\!eq}}\right\}
\end{array}\right]  \label{eq:H}
\end{equation}
where
\[
\bfH_{\tm{\!eq}}=[\bfh_1 \ \cdots  \ \bfh_K]
\]
represents the entire $(n_r \times K)$ channel fading matrix, where
\begin{multline*}
\bfx=\left[ \Re\{ x_{1,1}\} \, \Im\{ x_{1,1}\} \cdots \Re\{ x_{K,1}\}\, \Im\{ x_{K,1}\}\right.\\
\left. \,  \Re\{ x_{1,2}\} \, \Im\{ x_{1,2}\}\,  \cdots\,  \Re\{ x_{K,T}\} \, \Im\{ x_{K,T}\} \right]^\top
\end{multline*}
is the joint codeword vector aggregated over all users, and where $\bfy$ and $\bfw$ are defined similarly to $\bfx$. The joint codeword $\bfx$ is decoded using a joint ML-based decoder which, in its brute-force form, is MAC DMT optimal (cf.~\cite{TseVisZhe, Kuser}), i.e., achieves the optimal MAC DMT $d_\mac^*(r)$.
$\bfx$ is taken from a (sequence of) full-rate linear (lattice) code(s) $\mathcal{X}_r= {\cal X}_{r,1} \oplus \cdots \oplus {\cal X}_{r,K}$, where ${\cal X}_{r,i}=\Lambda_{r,i} \cap {\cal R}_{r,i} \subset \R^{2T}$ is the corresponding lattice code for the $i$th user consisting of those elements of the rank $2T$ lattice $\Lambda_{r,i}$ that lie inside the {\em shaping region} ${\cal R}_{r,i}$, which is properly chosen to meet the rate requirement $\abs{{\cal X}_{r,i}}=2^{R T}$ as well as the average power constraint. The region ${\cal R}_{r,i}$ is a compact convex subset of $\R^{2T}$. Specifically, we set $\Lambda_{r,i} := \snr^{-\frac r2}\Lambda_i$, to be a scaled lattice of another lattice $\Lambda_i$ whose generator matrix is denoted by $\bfG_i$. Thus for $\bfG=\tm{diag}(\bfG_1, \ldots, \bfG_K)$, the overall codeword is given by $\bfx \ = \ \snr^{-\frac r2} \bfG \bfs$ for some $\bfs \in \Z^{2KT}$.  Substituting this into~\eqref{eq:vect} yields the following equivalent channel input-output relation which will be used for sphere decoding of $\bfs$
\begin{equation}
\bfy = \bfM \bfs + \bfw \label{eq:newvect_mac}
\end{equation}
where $\bfM := \snr^{\frac {1-r}{2}} \bfH \bfG$.

This joint ML (or lattice) decoder is implemented as a bounded-search sphere decoder (SD). For SD algorithm details, readers are referred to \cite{conf_DaElCa04,SinghEJ12,JaldenE12} however for clarity of exposition, wherever necessary, essential details are provided during the complexity analysis. Implicit to the use of a sphere decoder is a chosen search radius $\delta$, a chosen decoding order corresponding to an order with which symbols of $\bfs$ are decoded, and a time-out policy that terminates the decoder once the algorithm exceeds a certain computational threshold. The termination policies that we use will be clarified depending on the setting. Our results will optimize over $\delta$ by setting $\delta:=\sqrt{z\log \snr}$, for a properly chosen $z>0$.
The idea here is that the search radius should be big enough so that --- loosely speaking --- the probability that AWGN noise has a norm larger than this radius, is sufficiently smaller than the probability of error under brute-force ML. At the same time, this radius needs to remain sufficiently small so that the number of elements within the search sphere is small enough most of the time.
Finally, the bounds will hold irrespective of the decoding order. As a result, we will henceforth limit reference to the search radius and the decoding order, mainly in proofs. As suggested before, the derivations focus on ML-based decoding, but given the aforementioned performance-and-complexity equivalence between ML and \emph{regularized} lattice based decoding \cite{SinghEJ12}, these same results extend automatically to the latter.

We finally note that the validity of the presented bounds depends on the existence of actual coding schemes that meet them, and which will be identified here, together with the associated SD implementation and halting policies. Regarding the codes, we hasten here to say that all rate-reliability results can be achieved with uncoded QAM transmission for any  $n_r$. This is one of the crucial contributions of this paper, and it is presented in Theorem~\ref{thm:qamau}.

\section{Performance-Complexity Tradeoff for MAC ML Decoding \label{sec:ComplPerfTradeoff}}

We first provide an upper bound on the complexity exponent that guarantees a certain MAC-DMT performance $d(r)$.

\vspace{3pt}
\begin{thm}
\label{thm:mac_cr_for_optimalDMT}
For the $K$-user MAC, the minimum over all ML-based decoders (all SD implementations, all halting and all decoding order policies) complexity exponent $c_{\tm{mac},d}(r)$ required to achieve a certain DMT $d(r)\leq d^*_{\tm{mac}}(r)$, is upper bounded by
\begin{equation}
\bar{c}_{\mac,d}(r) =
\left\{
\begin{array}{l}
\sup\limits_{{\boldsymbol \mu} \in {\cal B}(r)} (K-n_r)rT\\
\qquad +T\sum_{i=1}^{\nu}\left(r-(1-\mu_{i})^+\right)^+,\tm{if $K > n_r$,}\\\\
\sup\limits_{\boldsymbol{\mu} \in {\cal B}(r)} T\sum_{i=1}^{\nu}  \left[ \min\left\{ r, r+\mu_i -1\right\}\right]^+,\\
\hspace{1.8in} \tm{if $K \leq n_r$,}
\end{array} \right.
\label{eq:cbar1}
\end{equation}
where $\boldsymbol{\mu}=[\mu_1 \ \cdots \mu_\nu]^\top$, $\nu=\min\{K,n_r\}$ and
\[
{\cal B}(r) := \left\{ \boldsymbol{\mu}:
\begin{array}{l}
\mu_1 \geq \cdots \geq \mu_{\nu}, 0 \leq \mu_i \in \R\\
\sum_{i=1}^{\nu} \left( \abs{K-n_r}+2i-1\right)\mu_i \leq d(r)
\end{array} \right\}.
\]
\end{thm}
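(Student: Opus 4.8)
The plan is to follow the by-now standard two-step recipe for bounding a sphere-decoder complexity exponent (as in \cite{JaldenE12,SinghEJ12}): first express the number of lattice points visited during the search as a function of the singular values of the channel matrix $\bfM$, then integrate the resulting exponent against the high-SNR density of those singular values, using Laplace's method. Concretely, write $\snr^{-\mu_i}$ for the ordered squared singular values of $\bfH_{\tm{eq}}$ (equivalently the eigenvalues of $\bfH_{\tm{eq}}\bfH_{\tm{eq}}^\dag$ when $K>n_r$, or of $\bfH_{\tm{eq}}^\dag\bfH_{\tm{eq}}$ when $K\le n_r$), so $\nu=\min\{K,n_r\}$ of them are "active". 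The first key step is the combinatorial lemma: at a given node in the SD tree at level $k$, the number of children lying inside a sphere of radius $\delta=\sqrt{z\log\snr}$ is, up to polynomial factors, $\prod \max\{1,\snr^{(\cdot)}/\text{(singular value)}\}$-type products; summing over all $2KT$ levels and taking $\log/\log\snr$ gives the conditional complexity exponent as a piecewise-linear function of $\boldsymbol\mu$ and $r$. This is where the two cases $K>n_r$ and $K\le n_r$ separate: when $K>n_r$ the channel $\bfH\bfG$ has $K-n_r$ "zero" directions per time slot (rank deficiency), each of which contributes a full $rT$ to the exponent regardless of $\boldsymbol\mu$ — hence the leading $(K-n_r)rT$ term — while the remaining $\nu$ directions contribute $T\sum_i (r-(1-\mu_i)^+)^+$; when $K\le n_r$ there is no rank deficiency and one gets the cleaner $T\sum_i[\min\{r,r+\mu_i-1\}]^+$.

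The second key step is to restrict the supremum to the set $\Bset(r)$. The point is that we only need the complexity exponent to be valid \emph{for those channel realizations that matter} for achieving diversity $d(r)$: outage-type realizations where $\sum_i(|K-n_r|+2i-1)\mu_i > d(r)$ occur with probability $\dotleq \snr^{-d(r)}$, so even if the decoder times out on them (declaring an error), the overall error probability is still $\dotleq \snr^{-d(r)}$ — the SD can simply be halted at the brute-force budget $\snr^{rKT}$ on those events without affecting the DMT. Hence the relevant complexity exponent is the supremum of the conditional exponent over the \emph{complement} of the outage set, i.e.\ over $\boldsymbol\mu$ with $\mu_1\ge\cdots\ge\mu_\nu\ge 0$ and $\sum_i(|K-n_r|+2i-1)\mu_i\le d(r)$; the coefficients $|K-n_r|+2i-1$ are exactly the exponents appearing in the Gaussian/Wishart joint eigenvalue density $\prod_i \lambda_i^{|K-n_r|} \prod_{i<j}(\lambda_i-\lambda_j)^2$ after the change of variables $\lambda_i=\snr^{-\mu_i}$, which is the same outage-region description underlying \eqref{eq:dmt_mac_user}. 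Combining the two steps and invoking Varadhan/Laplace to replace the integral by its dominant exponent yields $\bar c_{\mac,d}(r)$ as the stated supremum.

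The main obstacle I expect is the first step — getting the conditional complexity exponent right, uniformly over decoding orders and halting policies. One has to argue that \emph{no} decoding order can do better than the bound claimed (so the bound is on the minimum over SD implementations, as the theorem asserts it is an upper bound on that minimum), which requires exhibiting a \emph{specific} good order (typically the natural channel-induced order after QR/Cholesky on $\bfM$) and then bounding the per-level point counts via a Gram–Schmidt / successive-interference-cancellation argument that tracks how the effective noise amplification at level $k$ depends on the $k$ smallest singular values. Care is also needed with the $(\cdot)^+$ truncations: levels whose effective "gain" exceeds $\snr^r$ are saturated because the code itself has only $\snr^{rT}$ points per user, which is what produces the inner $\min\{r,\cdot\}$ and outer $(\cdot)^+$. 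Matching this upper bound against an achievable code (uncoded QAM, forward-referenced to Theorem~\ref{thm:qamau}) is what makes the bound tight, but for the present theorem only the upper-bound direction is needed, so the proof reduces to the SD point-counting plus the Laplace integration over $\Bset(r)$.
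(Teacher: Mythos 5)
Your proposal matches the paper's proof in both of its essential steps: the per-level point-counting bound $N_k(\boldsymbol\mu)\le\prod_{i=1}^k\bigl[\sqrt{k}+2\min\{\delta/\sigma_i(\bfR),\sqrt{k}\,\snr^{r/2}\}\bigr]$ from \cite[Lemma~1]{JaldenE12} (which already yields the $(K-n_r)rT$ term from the trapezoidal $\bfR$ after MMSE preprocessing, and the saturation at $\snr^{r}$ giving the $\min\{r,\cdot\}$), followed by choosing the halting threshold so that $\Pr\{N_{\tm{SD}}(\boldsymbol\mu)\ge\snr^{c}\}\le\snr^{-d(r)}$, which is exactly your "sup over the complement of the rare set" and produces ${\cal B}(r)$ via the Wishart eigenvalue exponents. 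One small correction: since the theorem is an upper bound on the minimum over implementations, you only need one order/halting policy achieving it (and the singular-value bound is in fact order-independent), not an argument that no order does better; otherwise the approach is the paper's own.
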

\vspace{3pt}
\begin{proof}
See Appendix A.
\end{proof}
\vspace{3pt}

\begin{note}
A remark is in order, regarding the high complexity present in the under-determined MAC case of $n_r < K$, and particularly regarding the term $(K-n_r)rT$ appearing in \eqref{eq:cbar1} (and the same in \eqref{eq:cbar} appearing later for $T=1$), which results in an increased complexity exponent, irrespective of the desired diversity performance. To gain some insight on this, we quickly recall that a sphere decoder performs a QR decomposition of matrix $\bfM$ in~\eqref{eq:newvect_mac}, which though for $n_r < K$, results in a matrix $\bfR$ that is an upper trapezoid matrix, whose bottom row contains $2T(K - n_r)+1$ nonzero entries. Therefore, prior to processing the root node of a sphere-decoding tree, the sphere decoder must first search exhaustively among $N^{2T(K-n_r)}$ combinations of the $N$-ary PAM constellation points of the integer entries of $\bfs$ (cf.~\eqref{eq:newvect_mac}). For our particular case of having single antenna transmitters, $N$ suffices to be chosen as $N=\snr^\frac{r}{2}$, which explains the term $(K-n_r)rT$ in~\eqref{eq:cbar1}.
\end{note}

\subsection{Complexity cost of achieving the optimal MAC-DMT}
We here seek to derive the complexity exponent needed to achieve the MAC-DMT optimal $d^*_{\tm{mac}}(r)$ (cf. \eqref{eq:dmt_mac_user}). To answer this question, we will need to understand the coding duration requirement $T$ for such DMT optimality. Towards this, we recall that a first step towards coding that achieves the optimal DMT performance, was taken in \cite{TseVisZhe}, which showed that for the lightly-loaded multiplexing-gain region $r \le  \frac{n_r}{K+1}$, uncoded (QAM) transmission ($T=1$) is in fact DMT optimal. The question of deriving codes that can achieve DMT optimality for all $r$, was resolved in \cite{Kuser, remarkdmt,LVHLHV}, which provided --- depending on the values of $K$ and $n_r$ --- different DMT optimal lattice coding schemes by encoding over time, thus requiring $T>1$.

The following theorem plays a crucial role in tightening the complexity exponent, by proving that uncoded transmission ($T=1$, QAM) is indeed MAC-DMT optimal for all multiplexing gains. The following result holds for all isotropic channel probability distributions, and it extends the aforementioned result in \cite{Kuser, remarkdmt,LVHLHV,TseVis} to all $r$, all $n_r$ and all $K$.

\vspace{3pt}
\begin{thm} \label{thm:qamau}
Uncoded QAM for each user achieves DMT optimality in the $K$-user symmetric SIMO MAC for all $r,K,n_r$.
\end{thm}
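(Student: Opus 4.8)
The plan is to establish the matching pair $P_e\dotleq\snr^{-d^*_{\mac}(r)}$ and $P_e\dotgeq\snr^{-d^*_{\mac}(r)}$. The second is the classical MAC outage converse and is valid for any scheme: $P_e$ is at least the probability that, for the worst user-subset $S$, the genie-aided MIMO channel $\bfH_S=[\bfh_i]_{i\in S}$ cannot support sum-rate $|S|r\log\snr$, and this is $\doteq\snr^{-\min_{\emptyset\neq S}d^*_{|S|,n_r}(|S|r)}=\snr^{-d^*_{\mac}(r)}$ by the subset characterisation behind \eqref{eq:dmt_mac_user}. So all the work is the achievability bound. I would decompose a joint-ML error by the set $S=\{i:\hat{\bfs}_i\neq\bfs_i\}$ of users actually in error, so $P_e\doteq\max_{\emptyset\neq S}P_{e,S}$ with $P_{e,S}$ the probability of an error whose codeword difference is supported exactly on $S$; by the same identity it suffices to prove $P_{e,S}\dotleq\snr^{-d^*_{|S|,n_r}(|S|r)}$ for each $S$.

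\textbf{The regime $K\le n_r$.} Here a bare union bound already suffices. Conditioned on the channel, a competitor differing by $\Delta\bfx_S=\snr^{-r/2}\Delta\bfs_S$, with $\Delta\bfs_S$ a nonzero integer difference of QAM symbol vectors (so $1\dotleq\|\Delta\bfs_S\|^2\dotleq\snr^{r}$), has averaged pairwise error probability $\doteq(1+\snr^{1-r}\|\Delta\bfs_S\|^2)^{-n_r}$ because $\bfH_S\Delta\bfx_S\sim\CN{\mathbf{0}}{\|\Delta\bfx_S\|^2\bfI_{n_r}}$; summing over the $\doteq\snr^{|S|\beta}$ differences with $\|\Delta\bfs_S\|^2\doteq\snr^{\beta}$ and maximising over $\beta\in[0,r]$ (the exponent being nonincreasing in $\beta$ once $|S|\le n_r$) gives $P_{e,S}\dotleq\snr^{-n_r(1-r)}$. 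Since $d^*_{\mac}(r)\le d^*_{1,n_r}(r)=n_r(1-r)$ always, this yields $P_e\dotleq\snr^{-d^*_{\mac}(r)}$ over the whole valid multiplexing range, recovering in one stroke the lightly-loaded result of \cite{TseVisZhe}.

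\textbf{The regime $K>n_r$.} This is the substantive case — the one extending the prior literature to all $K,n_r,r$ — where the bare union bound is too loose (it treats all $\doteq\snr^{Kr}$ joint competitors as simultaneously confusable) and the analysis must be sharpened by conditioning on channel quality. I would pass to the sphere-decoding model \eqref{eq:newvect_mac} and the QR factorisation $\bfM=\bfQ\bfR$, with $\bfR$ the upper-trapezoidal matrix of the Remark ($K-n_r$ ``free'' columns and $n_r$ pivot columns whose diagonal magnitudes track, exponentially, the ordered singular-value exponents $\mu_1\ge\cdots\ge\mu_{n_r}$ of $\bfH_{\tm{eq}}$, as in the complexity machinery of \cite{JaldenE12,SinghEJ12} behind Appendix~A). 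Conditioned on the channel, the conditional error probability $P_{e,S\mid\bfH}$ is bounded by $\min\{1,N(\bfH)\}$, where $N(\bfH)$ counts the nonzero shaping-region lattice differences lying in the confusable ellipsoid $\{\|\bfR\Delta\bfs\|^2\dotleq\snr^{r-1}\}$ and the truncation at $1$ is what repairs the union bound; then $P_{e,S}\doteq\E[\min\{1,N(\bfH)\}]$, which I would evaluate by integrating against the isotropic eigenvalue density $\doteq\snr^{-\sum_{i=1}^{n_r}(|K-n_r|+2i-1)\mu_i}$ via Laplace's principle.

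\textbf{The main obstacle and the finish.} The step I expect to be hardest is the estimate for $N(\bfH)$: a crude coordinate-by-coordinate count through the trapezoidal $\bfR$ overstates the free-column contribution and must be replaced by a sharper geometry-of-numbers / equidistribution estimate that controls, for an isotropically oriented channel, the number of cubic-lattice points of the shaping region that land simultaneously in all $n_r$ thin slabs cut out by the ill-conditioned singular directions of $\bfH_{\tm{eq}}$ (equivalently, near its $(K-n_r)$-dimensional near-kernel). It is exactly here that the structure of QAM and the isotropy of the channel distribution are both consumed, and exactly this ingredient that goes beyond the lightly-loaded argument of \cite{TseVisZhe}. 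Granting $N(\bfH)\dotleq\snr^{\varphi(\boldsymbol{\mu},r)}$ with the right $\varphi$, the eigenvalue integral collapses to the linear program $\inf\{\sum_i(|K-n_r|+2i-1)\mu_i:\varphi(\boldsymbol{\mu},r)\ge0,\ \mu_1\ge\cdots\ge\mu_{n_r}\ge0\}$ over the same cone appearing in ${\cal B}(r)$ of Theorem~\ref{thm:mac_cr_for_optimalDMT}; a direct optimisation (the minimiser saturating one or all of the constraints according as $r\le n_r/(K+1)$ or not) gives value $n_r(1-r)$ for $r\le n_r/(K+1)$ and $d^*_{K,n_r}(Kr)=d^*_{n_r,K}(Kr)$ for $n_r/(K+1)<r\le n_r/K$, i.e. precisely $d^*_{\mac}(r)$. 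Together with the converse, $P_e\doteq\snr^{-d^*_{\mac}(r)}$ for all $r,K,n_r$.
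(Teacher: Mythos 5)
Your converse and your treatment of $K\le n_r$ are fine (note that in that regime $d^*_{\mac}(r)=n_r(1-r)$ over the entire range $r\le 1$, which is why the bare union bound closes that case). The genuine gap is in the substantive case $K>n_r$, where you defer the entire argument to an unproven ``geometry-of-numbers / equidistribution estimate'' for $N(\bfH)$. This is not a grantable technicality. With $T=1$ the matrix $\bfH_{\tm{\!eq}}$ has a $(K-n_r)$-dimensional null space for every realization, so there is no lower bound on $\|\bfH_{\tm{\!eq}}\Delta\bfs\|$ in terms of $\|\Delta\bfs\|$ and the singular-value exponents $\mu_i$. Moreover, the natural first-moment (volume) computation for your count gives $\E[N(\bfH)]\doteq\snr^{Kr-n_r}$ (the $\doteq\snr^{Kr}$ competitors at scale $\|\Delta\bfs\|^2\doteq\snr^{r}$ each land in the confusable set with probability $\doteq\snr^{-n_r}$), so the truncated union bound $\Pr\{N\ge 1\}\le\E[N]$ yields diversity only $n_r-Kr$, which is \emph{strictly below} the target $n_r(1-r)$ whenever $K>n_r$. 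To do better along your route you must prove that the confusability events of the exponentially many competitors are heavily correlated, i.e.\ that $N$ is large whenever it is nonzero; that second-moment/concentration analysis is exactly the hard part you have left open, and it is also why the lightly-loaded argument of \cite{TseVisZhe} did not already settle all $r,K,n_r$.

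The paper closes this gap with a device your error-support decomposition cannot reach. It passes to the $K$-fold extension ${\cal X}_{r,\ext}=\bigoplus_{i=1}^K{\cal X}_r$ ($K$ independent uses of uncoded QAM over the same channel realization, which trivially has the same DMT as ${\cal X}_r$), so that a codeword difference becomes a $K\times K$ Gaussian-integer matrix $\boldsymbol{\Delta}_{\bfX'}$; error events are then partitioned by $k=\rank(\boldsymbol{\Delta}_{\bfX'})$ rather than by the set of users in error. The key point is the pair of bounds $\det(\boldsymbol{\Omega}_{\bfX'})\,\dot\geq\,\snr^{-kr}$ and $\trace(\boldsymbol{\Omega}_{\bfX'})\,\dot\leq\,1$ for every rank-$k$ difference: they force the confusability event $\|\snr^{1/2}\bfH_{\tm{\!eq}}\boldsymbol{\Delta}_{\bfX'}\|^2\,\dot\leq\,1$ to imply the single channel condition $\sum_{i=1}^m(1-\mu_i)^+\le kr$, \emph{independently of which competitor $\bfX'$ is involved}. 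The union over all rank-$k$ competitors therefore collapses to one outage-type event of probability $\doteq\snr^{-d^*_{k,n_r}(kr)}$, and summing over $k=1,\dots,K$ gives $\snr^{-d^*_{\mac}(r)}$ with no lattice-point counting at all. Either supply the missing correlation analysis for $N(\bfH)$, or adopt the extension-plus-rank-classification argument.
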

\vspace{3pt}
\begin{proof}
See Appendix \ref{app:qamau}.
\end{proof}
\vspace{3pt}

We can now apply Theorem~\ref{thm:mac_cr_for_optimalDMT} to provide upper bounds on the complexity exponent $c^*_{\tm{mac}}(r)$ that guarantees the DMT optimal $d^*_\mac(r)$. The bound is constructive and is optimized over all known ML-based decoders, and over all existing codes. The result holds for i.i.d. Rayleigh fading statistics and for any decoding order policy. It corresponds to a DMT optimal transceiver that employs QAM transmission with $T=1$, a search radius $\delta > \sqrt{d^*_\mac(r) \log \snr}$, and a decoding halting policy that naturally halts decoding at $\snr^{c^*_\mac(r)}$ flops.

\vspace{3pt}
\begin{thm}
\label{thm:cr_mac}
For the $K$-user MAC, the minimum over all codes and all ML-based decoders (all SD implementations, all halting and all decoding order policies) complexity exponent $c^*_{\tm{mac}}(r)$ required to achieve the optimal DMT $d^*_{\tm{mac}}(r)$, is upper bounded by
\begin{equation}
\bar{c}_\mac(r) = \left\{
\begin{array}{l}
\sup\limits_{\boldsymbol{\mu} \in {\cal B}(r)} (K-n_r)r+\sum_{i=1}^{\nu}\left(r-(1-\mu_{i})^+\right)^+,\\
\hspace{1.8in}\tm{if $K > n_r$}\\\\
\sup\limits_{\boldsymbol{\mu} \in {\cal B}(r)} \sum_{i=1}^{\nu}  \left[ \min\left\{ r, r+\mu_i -1\right\}\right]^+,\\
\hspace{1.8in} \tm{if $K \leq n_r$,}
\end{array} \right.
\label{eq:cbar}
\end{equation}
where $\boldsymbol{\mu}=[\mu_1 \ \cdots \mu_\nu]^\top$, $\nu=\min\{K,n_r\}$  and
\[
{\cal B}(r) := \left\{ \boldsymbol{\mu}:
\begin{array}{l}
\mu_1 \geq \cdots \geq \mu_{\nu}, 0 \leq \mu_i \in \R\\
\sum_{i=1}^{\nu} \left( \abs{K-n_r}+2i-1\right)\mu_i \leq d^*_\mac(r)
\end{array} \right\}.
\]
\end{thm}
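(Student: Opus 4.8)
The plan is to derive Theorem~\ref{thm:cr_mac} as the specialization of Theorem~\ref{thm:mac_cr_for_optimalDMT} to the DMT-optimal uncoded scheme supplied by Theorem~\ref{thm:qamau}. Theorem~\ref{thm:mac_cr_for_optimalDMT} already bounds, for any full-rate lattice code of coding duration $T$ that achieves a target DMT $d(r)\leq d^*_{\mac}(r)$, the complexity exponent attainable by some ML-based SD implementation; Theorem~\ref{thm:qamau} then hands us one such code with the extremal parameters $T=1$ and $d(r)=d^*_{\mac}(r)$. Substituting these into \eqref{eq:cbar1} collapses the factors of $T$ and pins the constraint set $\mathcal{B}(r)$ at $d^*_{\mac}(r)$, which yields exactly \eqref{eq:cbar}. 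Since $c^*_{\mac}(r)$ is a minimum over \emph{all} codes and \emph{all} ML-based decoders, exhibiting this single transceiver that achieves $d^*_{\mac}(r)$ at complexity exponent $\bar{c}_{\mac}(r)$ is enough to establish the claimed upper bound.

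Concretely, I would first check that uncoded QAM for each user is a member of the lattice-code family of Section~\ref{sec:latticeEncoderDecoder} with $T=1$: take $\Lambda_i$ to be the (suitably scaled) integer lattice $\Z^2$ with generator $\bfG_i=\boldsymbol{I}_2$, and the shaping region $\mathcal{R}_{r,i}\subset\R^2$ to be the square sized so that $|\mathcal{X}_{r,i}|=2^{R}=\snr^{r}$ while respecting the average power constraint; this is compact and convex, as required. Then \eqref{eq:newvect_mac} holds with $\bfM=\snr^{(1-r)/2}\bfH$ and $\bfs\in\Z^{2K}$, and Theorem~\ref{thm:mac_cr_for_optimalDMT} applies to it verbatim with $T=1$. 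By Theorem~\ref{thm:qamau} this code attains $d(r)=d^*_{\mac}(r)$, so the admissible multiplier set of Theorem~\ref{thm:mac_cr_for_optimalDMT} becomes $\mathcal{B}(r)=\{\boldsymbol{\mu}:\mu_1\geq\cdots\geq\mu_\nu\geq 0,\ \sum_{i=1}^{\nu}(|K-n_r|+2i-1)\mu_i\leq d^*_{\mac}(r)\}$, which is the set in the statement of Theorem~\ref{thm:cr_mac}; and putting $T=1$ into the two branches of \eqref{eq:cbar1} gives $(K-n_r)r+\sum_{i=1}^{\nu}(r-(1-\mu_i)^+)^+$ for $K>n_r$ and $\sum_{i=1}^{\nu}[\min\{r,r+\mu_i-1\}]^+$ for $K\leq n_r$, i.e.\ $\bar{c}_{\mac}(r)$ of \eqref{eq:cbar}.

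I would then record the SD implementation details that make this transceiver simultaneously DMT-optimal and $\snr^{\bar{c}_{\mac}(r)}$-flop-bounded, mirroring the construction in the proof of Theorem~\ref{thm:mac_cr_for_optimalDMT}: a search radius $\delta=\sqrt{z\log\snr}$ with any fixed $z>d^*_{\mac}(r)$ (so the probability that the effective noise leaves the search sphere is $\dot\leq\snr^{-d^*_{\mac}(r)}$ and the diversity is not degraded), together with a halting rule that aborts and declares an error once the flop count reaches $\snr^{\bar{c}_{\mac}(r)}$. The complexity analysis underlying Theorem~\ref{thm:mac_cr_for_optimalDMT} shows that the probability of such a premature halt is itself $\dot\leq\snr^{-d^*_{\mac}(r)}$, so a union bound over the three error contributions (brute-force ML error, radius overflow, early termination) still gives $P_e(r)\doteq\snr^{-d^*_{\mac}(r)}$; the decoding order plays no role, exactly as in Theorem~\ref{thm:mac_cr_for_optimalDMT}.

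The genuinely hard work does not occur in this proof: it lives in Appendix~A (Theorem~\ref{thm:mac_cr_for_optimalDMT}) and Appendix~\ref{app:qamau} (Theorem~\ref{thm:qamau}). Granting those, the only real checks here are (i) that uncoded QAM legitimately instantiates the lattice-code/SD framework with $T=1$, so the Appendix~A estimates transfer without modification, and (ii) that the two extra error events created by truncating the search radius and by early termination are each exponentially dominated by $\snr^{-d^*_{\mac}(r)}$ — both immediate from the high-SNR bounds already available. One subtlety worth flagging is that every term of \eqref{eq:cbar1} is nondecreasing in $T$ while $\mathcal{B}(r)$ is pinned at $d^*_{\mac}(r)$ independently of $T$; hence $T=1$ is at once the smallest coding duration permitted by Theorem~\ref{thm:qamau} and the one minimizing the complexity-exponent bound, so no code with $T>1$ can do better along this route and $\bar{c}_{\mac}(r)$ is the sharpest bound it produces.
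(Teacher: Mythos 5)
Your proposal is correct and matches the paper's own (one-line) proof: Theorem~\ref{thm:cr_mac} is obtained by substituting $T=1$ and $d(r)=d^*_{\mac}(r)$ into Theorem~\ref{thm:mac_cr_for_optimalDMT}, with Theorem~\ref{thm:qamau} supplying the DMT-optimal uncoded QAM transceiver that justifies these choices. The additional checks you record (QAM as a $T=1$ instance of the lattice framework, the radius/halting policies, and the monotonicity of the bound in $T$) are consistent with the paper and simply make explicit what the authors leave implicit.
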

\vspace{3pt}
\begin{proof}
It is a direct consequence of Theorems \ref{thm:mac_cr_for_optimalDMT} and \ref{thm:qamau}.
\end{proof}
\vspace{3pt}

The following corollaries provide explicit expressions for $\bar{c}_\mac(r)$  for the single-input single-output (SISO) case of $n_r=1$, and for $n_r=K$.

\begin{cor} \label{cor:nr1}
For the $K$-user SISO MAC ($n_r=1$), the complexity exponent required to achieve the optimal $d^*_{\tm{mac}}(r)$, is upper bounded by
\begin{equation}
\bar{c}_\mac(r) = (K-1)r, \quad \tm{ for } 0 \leq r \leq \frac 1K.
\end{equation}
\end{cor}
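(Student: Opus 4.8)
The plan is to specialize the general bound $\bar{c}_\mac(r)$ from Theorem~\ref{thm:cr_mac} to the case $n_r = 1$ and carry out the resulting optimization explicitly. With $n_r = 1$ we have $\nu = \min\{K, 1\} = 1$, so the vector $\boldsymbol{\mu}$ collapses to a single scalar $\mu_1 \geq 0$, and for $K \geq 1$ we are in the regime $K > n_r$ (for $K=1$ the formula degenerates harmlessly). The branch of \eqref{eq:cbar} that applies is therefore
\[
\bar{c}_\mac(r) = \sup_{\mu_1 \in {\cal B}(r)} (K-1)r + \left(r - (1-\mu_1)^+\right)^+ ,
\]
where, with $|K - n_r| = K-1$ and the single term $2i-1 = 1$ at $i=1$, the constraint set reduces to
\[
{\cal B}(r) = \left\{ \mu_1 \geq 0 : \bigl((K-1) + 1\bigr)\mu_1 = K\mu_1 \leq d^*_\mac(r) \right\}.
\]
So the problem is simply to maximize $(K-1)r + \bigl(r - (1-\mu_1)^+\bigr)^+$ over $0 \leq \mu_1 \leq d^*_\mac(r)/K$.

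First I would pin down $d^*_\mac(r)$ on the relevant multiplexing-gain interval. For $n_r = 1$, the lightly-loaded region in \eqref{eq:dmt_mac_user} is $0 \leq r \leq \frac{1}{K+1}$ where $d^*_\mac(r) = 1 - r$, and the heavily-loaded region is $\frac{1}{K+1} < r \leq \frac{1}{K}$ where $d^*_\mac(r) = d^*_{K,1}(Kr)$; since a $(K\times 1)$ MIMO channel has optimal DMT $d^*_{K,1}(s) = 1 - s$ for $0 \leq s \leq 1$, this is $1 - Kr$ on that interval. In either case $d^*_\mac(r) \leq 1$, so the constraint gives $\mu_1 \leq 1/K \leq 1$, hence $(1-\mu_1)^+ = 1 - \mu_1$ and the objective becomes $(K-1)r + (r - 1 + \mu_1)^+$. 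Because this is nondecreasing in $\mu_1$, the supremum is attained at the largest admissible value $\mu_1 = d^*_\mac(r)/K$, yielding
\[
\bar{c}_\mac(r) = (K-1)r + \Bigl(r - 1 + \tfrac{d^*_\mac(r)}{K}\Bigr)^{\!+}.
\]

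The remaining step is to check that the term $\bigl(r - 1 + d^*_\mac(r)/K\bigr)^+$ vanishes on $0 \leq r \leq \frac{1}{K}$, leaving exactly $\bar{c}_\mac(r) = (K-1)r$. On the lightly-loaded piece, $r - 1 + (1-r)/K = r(K-1)/K + (1-K)/K = (K-1)(r-1)/K \leq 0$ since $r \leq 1/(K+1) < 1$. On the heavily-loaded piece, $r - 1 + (1 - Kr)/K = r - 1 + 1/K - r = 1/K - 1 \leq 0$ for $K \geq 1$. In both cases the positive part is zero, which establishes the claim. I would also remark that the endpoint $r = 1/K$ is the boundary of the MAC multiplexing-gain region, so no larger $r$ need be considered, and that for $K=1$ the formula correctly gives $\bar{c}_\mac(r) = 0$, consistent with the single-user SISO channel requiring no search.

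I do not anticipate a genuine obstacle here: the reduction is mechanical once $\nu = 1$ is noted and $d^*_\mac$ is substituted. The only mildly delicate point is the case analysis across the lightly- and heavily-loaded regions to verify the positive part vanishes; a reader could be tempted to overlook the heavily-loaded sliver $\frac{1}{K+1} < r \leq \frac{1}{K}$, but as shown above the inequality $1/K - 1 \leq 0$ disposes of it immediately. One should also double-check that we are genuinely in the $K > n_r$ branch and not accidentally invoking the $K \leq n_r$ branch for $K = 1$; since both branches coincide in value when $K = n_r = 1$ (the term $(K-n_r)r = 0$ and the two summands agree), this causes no inconsistency.
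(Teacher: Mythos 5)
Your overall route is the right one and is exactly how the paper obtains this corollary (it gives no separate proof; the result is a direct specialization of Theorem~\ref{thm:cr_mac} to $\nu=1$, followed by checking that the positive-part term vanishes). However, there is one genuine error in your verification of the heavily-loaded sliver: the optimal DMT of a $(K\times 1)$ MIMO channel is $d^*_{K,1}(s)=K(1-s)$ for $0\le s\le 1$ (the Zheng--Tse curve connects $(0,K\cdot 1)$ to $(1,0)$), not $1-s$ as you wrote. Hence on $\frac{1}{K+1}<r\le\frac1K$ one has $d^*_{\tm{mac}}(r)=K(1-Kr)$, not $1-Kr$. A quick sanity check would have caught this: your value gives $d^*_{\tm{mac}}\bigl(\tfrac{1}{K+1}\bigr)=\tfrac{1}{K+1}$ from the heavily-loaded branch versus $\tfrac{K}{K+1}$ from the lightly-loaded branch, a discontinuity that cannot occur.

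This matters for the logic, not just the bookkeeping: by substituting a too-small value of $d^*_{\tm{mac}}(r)$ you shrink the feasible set ${\cal B}(r)$, so the supremum you compute is only a lower bound on $\bar{c}_\mac(r)$, whereas the corollary asserts the value of the supremum over the true, larger set. Fortunately the conclusion survives the correction: with $d^*_{\tm{mac}}(r)=K(1-Kr)$ the constraint $K\mu_1\le d^*_{\tm{mac}}(r)$ gives $\mu_1\le 1-Kr\le 1$, so at the maximizer
\begin{equation*}
r-(1-\mu_1)^+ \;=\; r-1+(1-Kr)\;=\;-(K-1)r\;\le\;0,
\end{equation*}
and the positive part is again zero, yielding $\bar{c}_\mac(r)=(K-1)r$ on the whole range $0\le r\le\frac1K$. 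Everything else in your argument (the reduction to a scalar $\mu_1$, the monotonicity of the objective in $\mu_1$, the lightly-loaded computation, and the remark on the $K=1$ degenerate case) is correct.
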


\begin{cor} \label{cor:nrK}
For the $K$-user MAC with $n_r=K$, the complexity exponent required to achieve the optimal $d^*_{\tm{mac}}(r)$, is upper bounded by
\begin{multline}
\bar{c}_\mac(r)=r\left\lfloor{\sqrt{K(1-r)}}\right\rfloor\\ +\left(r-1+\frac{K(1-r) -(\left\lfloor{\sqrt{K(1-r)}}\right\rfloor)^2 }{2\left\lfloor{\sqrt{K(1-r)}}\right\rfloor+1}\right)^+.\label{eq:nrK1}
\end{multline}
\end{cor}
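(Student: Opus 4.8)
The plan is to specialize the bound $\bar c_\mac(r)$ of Theorem~\ref{thm:cr_mac} to the case $n_r = K$, where $\nu = \min\{K,n_r\} = K$ and $|K-n_r| = 0$, so that the second branch of \eqref{eq:cbar} applies and the constraint set simplifies to ${\cal B}(r) = \{\boldsymbol{\mu} : \mu_1 \ge \cdots \ge \mu_K \ge 0,\ \sum_{i=1}^{K}(2i-1)\mu_i \le d^*_\mac(r)\}$. The objective becomes $\sum_{i=1}^{K}[\min\{r, r+\mu_i-1\}]^+$. The first reduction I would make is to observe that, for $n_r = K$, the optimal MAC-DMT in the relevant range is $d^*_\mac(r) = d^*_{K,K}(Kr)$ which for the square MIMO channel reduces to $d^*_{K,K}(Kr) = K(1-r)$ on the heavily-loaded region (and the lightly-loaded region $r \le 1/(K+1)$ is subsumed since there $d^*_\mac(r) = K(1-r)$ as well); hence the constraint reads $\sum_{i=1}^{K}(2i-1)\mu_i \le K(1-r)$.

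Next I would analyze the optimization $\sup \sum_{i=1}^K [\min\{r, r+\mu_i-1\}]^+$ over this ordered simplex. Since each summand $[\min\{r,r+\mu_i-1\}]^+$ is a nondecreasing concave-then-flat function of $\mu_i$ — it is $0$ for $\mu_i \le 1-r$, grows linearly with slope $1$ as $\mu_i - (1-r)$ for $1-r \le \mu_i \le 1$, and saturates at $r$ for $\mu_i \ge 1$ — and since the constraint weights $(2i-1)$ are increasing in $i$ while the ordering forces $\mu_1$ largest, the efficient use of ``budget'' $K(1-r)$ is to make as many of the leading coordinates as possible equal to the saturating value $1$ (each costing $(2i-1)$ in budget and yielding $r$ in objective), and then to spend the remainder on the next coordinate. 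If $p := \lfloor \sqrt{K(1-r)} \rfloor$, then setting $\mu_1 = \cdots = \mu_p = 1$ costs $\sum_{i=1}^{p}(2i-1) = p^2 \le K(1-r)$, which is feasible, and yields objective $pr$; the leftover budget $K(1-r) - p^2$ is then assigned to $\mu_{p+1}$, giving $\mu_{p+1} = \frac{K(1-r)-p^2}{2p+1}$ (this is $\le 1$ precisely because $K(1-r) < (p+1)^2$), contributing $(r - 1 + \mu_{p+1})^+ = \bigl(r-1+\frac{K(1-r)-p^2}{2p+1}\bigr)^+$ to the objective. Summing these gives exactly \eqref{eq:nrK1}.

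The main obstacle — and the part requiring genuine care rather than bookkeeping — is verifying that this greedy ``fill the cheapest saturating slots first'' allocation is actually optimal, i.e., that no better value is obtained by, say, spreading budget across many unsaturated coordinates or by not fully saturating the leading ones. The argument I would give is an exchange/convexity argument: because the per-coordinate gain rate (objective increment per unit budget) on coordinate $i$ in the growing regime is $\frac{1}{2i-1}$, which is largest for small $i$, and because saturating coordinate $i$ at $\mu_i = 1$ extracts the full available $r$ from it at the cheapest rate among all coordinates not yet used, any feasible allocation can be transformed — without decreasing the objective and without violating feasibility or the ordering $\mu_1 \ge \cdots \ge \mu_K$ — into one of the claimed form by moving budget from higher-indexed (more expensive) coordinates to lower-indexed ones until the latter saturate. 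One must also check the boundary/degenerate cases: when $K(1-r)$ is a perfect square the remainder term vanishes and the $(\cdot)^+$ is needed because $r-1 \le 0$; when $p = 0$ (i.e. $r$ close to $1$) the formula correctly collapses to $(r-1+K(1-r))^+$; and when $r$ is small enough that $p \ge K$ one must confirm the constraint still binds as written. Once optimality of the greedy allocation is established, \eqref{eq:nrK1} follows by direct substitution.
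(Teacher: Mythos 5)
Your proposal is correct and follows what is evidently the intended route: the paper states Corollary~\ref{cor:nrK} without proof, as a direct evaluation of the supremum in Theorem~\ref{thm:cr_mac} for $n_r=K$, which is exactly what you carry out. Your reduction of the budget to $K(1-r)$ is right (for $n_r=K$ the heavily-loaded branch $d^*_{K,K}(Kr)$ also equals $K(1-r)$ since $Kr>K-1$ there), and the greedy allocation $\mu_1=\cdots=\mu_p=1$, $\mu_{p+1}=\bigl(K(1-r)-p^2\bigr)/(2p+1)$ with $p=\lfloor\sqrt{K(1-r)}\rfloor$ reproduces \eqref{eq:nrK1}. The only point to tighten is the optimality argument you flag yourself: because each summand $\bigl[\min\{r,r+\mu_i-1\}\bigr]^+$ has a dead zone on $[0,1-r)$, a purely marginal exchange can stall (an infinitesimal transfer into a coordinate still in its dead zone gains nothing), so the exchange must be stated as a bulk transfer --- given any feasible point, move the entire budget allocated to the highest-indexed active coordinate down to the lowest-indexed unsaturated one; since both the ``activation cost'' $(2i-1)(1-r)$ and the marginal cost $(2i-1)$ per unit of objective are smallest for small $i$ and the cap $r$ is common to all coordinates, this never decreases the objective and terminates at the claimed configuration. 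With that phrasing the argument is complete, including the degenerate cases you list.
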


Fig.~\ref{fig:mac_nr1} plots $\bar{c}_\mac(r)$ for $K=4,5$ and $n_r=1$, while Fig.~\ref{fig:comp_vblast_nr_3_4} plots $\bar{c}_\mac(r)$ for $n_r=K=3,4,5$. In interpreting the plots below, we recall that the here maximum achievable multiplexing gain is $\frac{n_r}{K}$.

\begin{note}
As we see from \eqref{eq:nrK1} and Fig.~\ref{fig:comp_vblast_nr_3_4}, when $n_r = K$, the complexity exponent $c(r)$ is not monotonically increasing.
This is simply because an increasing $r$ might indeed increase the objective function in the maximization, but at the same time it also decreases $d(r)$ and thus also decreases the volume of ${\cal B}(r)$ over which maximization takes place.
\end{note}

\begin{figure}[ht!]
\[
\includegraphics[width=0.8\columnwidth]{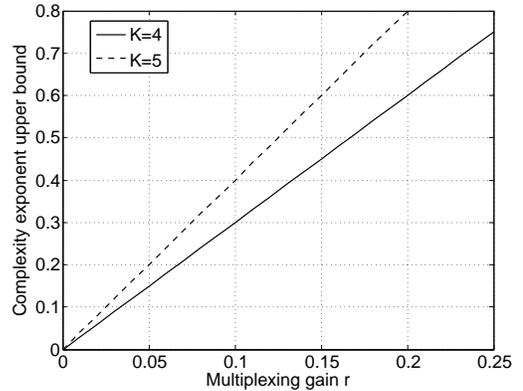}
\]
\caption{Bound on the complexity exponent guaranteeing DMT optimality in the $K$-user SISO MAC: $K=4,5$ and $n_r=1$.}
\label{fig:mac_nr1}
\end{figure}

\begin{figure}[ht!]
\[
\includegraphics[width=0.8\columnwidth]{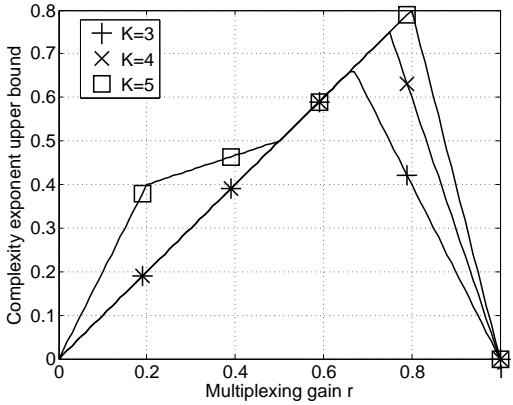}
\]
\caption{Bound on the complexity exponent guaranteeing DMT optimality in the $K$-user SIMO MAC: $K=3,4,5$ and $n_r=K$.}
\label{fig:comp_vblast_nr_3_4}
\end{figure}

\subsection{The benefit of increasing the number of receive antennas}
In the following we show the rather surprising result that a substantial increase in the number $n_r$ of receive antennas, allows for maximal reductions in the complexity exponent. This is indeed surprising because, as $n_r$ increases, a MAC-DMT optimal decoder must handle channels that are increasingly more singular~\footnote{Recall from~\eqref{eq:dmt_mac_user} that for $n_r > K$ the optimal MAC DMT is $d_\mac^*(r)=n_r(1-r)^+$.} and thus typically harder to decode because near-singular channels typically result in denser signaling constellations at the receiver.

\vspace{3pt}

\begin{cor}\label{cor:largenr}
For the $K$-user SIMO MAC with $n_r \gg K$, the complexity exponent for achieving the optimal MAC-DMT $d^*_{\tm{mac}}(r)$, approaches
\[
\bar{c}_\mac(r) =  \min \left\{ r, \frac{K-1}{n_r-K+1}(1-r)\right\}
\]
and in the limit of asymptotically large $n_r$, the complexity exponent tends to zero.
\end{cor}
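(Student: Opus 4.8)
\textbf{Proof proposal for Corollary~\ref{cor:largenr}.}

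The plan is to evaluate the bound $\bar{c}_\mac(r)$ from Theorem~\ref{thm:cr_mac} in the regime $K \le n_r$ (which certainly holds when $n_r \gg K$), and take the limit. In that case $\nu = K$, the optimal MAC-DMT is $d^*_\mac(r) = n_r(1-r)^+$, and the constraint set becomes ${\cal B}(r) = \{\boldsymbol{\mu}: \mu_1 \ge \cdots \ge \mu_K \ge 0,\ \sum_{i=1}^{K}(n_r - K + 2i - 1)\mu_i \le n_r(1-r)\}$. The objective to maximize is $\sum_{i=1}^{K}[\min\{r,\, r+\mu_i-1\}]^+ = \sum_{i=1}^{K}[r - (1-\mu_i)^+]^+$. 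First I would observe that each summand is nondecreasing in $\mu_i$ and is capped at $r$ (attained once $\mu_i \ge 1$), and that it stays at $0$ until $\mu_i$ exceeds $1-r$. So the objective is a sum of $K$ ``clipped ramp'' functions of the $\mu_i$.

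Next I would argue that in the large-$n_r$ regime the binding behaviour is governed by the largest coefficient, which for $i=K$ is $n_r - K + 2K - 1 = n_r + K - 1$, and for $i=1$ is $n_r - K + 1$. Since all coefficients are $\Theta(n_r)$ while the budget $n_r(1-r)$ is also $\Theta(n_r)$, each individual $\mu_i$ can be at most roughly $\frac{n_r(1-r)}{n_r - K + 1} \to (1-r)$ as $n_r \to \infty$; more carefully, the ordering constraint $\mu_1 \ge \cdots \ge \mu_K$ together with the weighted-sum constraint forces, at the optimum, a ``water-filling''-type solution where one spends the entire budget on the cheapest coordinate $\mu_1$ (coefficient $n_r-K+1$) as long as that helps, because the objective treats all coordinates symmetrically while the constraint weights $\mu_i$ unequally. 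Hence the optimizer concentrates mass on $\mu_1$, setting $\mu_1 = \frac{n_r(1-r)}{n_r-K+1}$ and $\mu_2 = \cdots = \mu_K = 0$ (this is feasible and respects the ordering). Then only the $i=1$ term contributes, giving $\bar{c}_\mac(r) = [r - (1-\mu_1)^+]^+$ with $\mu_1 = \frac{n_r(1-r)}{n_r - K + 1} = (1-r)\bigl(1 + \frac{K-1}{n_r-K+1}\bigr) = (1-r) + \frac{K-1}{n_r-K+1}(1-r)$, so $(1-\mu_1)^+ = \bigl(r - \frac{K-1}{n_r-K+1}(1-r)\bigr)^+$ and therefore $\bar{c}_\mac(r) = \min\bigl\{r,\ \frac{K-1}{n_r-K+1}(1-r)\bigr\}$, exactly the claimed expression. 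Letting $n_r \to \infty$ with $K$ fixed sends $\frac{K-1}{n_r-K+1}(1-r) \to 0$, so the complexity exponent tends to zero.

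The main obstacle is justifying rigorously that the supremum in Theorem~\ref{thm:cr_mac} is attained by concentrating all the ``$\mu$-budget'' on the single coordinate $\mu_1$ rather than spreading it over several $\mu_i$. The subtlety is that while spreading mass lets more summands become nonzero (each summand being an increasing function of its own $\mu_i$), the ordering constraint $\mu_1 \ge \mu_2 \ge \cdots$ means that activating $\mu_2 > 0$ forces $\mu_1 \ge \mu_2$ as well, which is expensive under the weighted budget; one must check that the marginal gain of a second clipped-ramp term never outweighs the marginal budget cost, uniformly in $n_r$ large. I would handle this by a direct exchange/convexity argument: fix the value of the objective and show the minimal required budget is a convex function minimized at the extreme point $\boldsymbol{\mu} = (\mu_1, 0, \ldots, 0)$, using that the coefficient $n_r - K + 2i - 1$ is strictly increasing in $i$ so coordinate $1$ is strictly cheapest. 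A secondary (easier) point is to confirm the boundary behaviour: when $r$ is small the budget is large enough that $\mu_1 \ge 1$ is affordable and the term saturates at $r$, giving the ``$\min$'' with $r$; when $r$ is close to $\frac{n_r}{K}$ (so $1-r$ is small but still the relevant regime) the second branch $\frac{K-1}{n_r-K+1}(1-r)$ dominates, and everything is consistent with $\bar{c}_\mac$ vanishing as $n_r \to \infty$.
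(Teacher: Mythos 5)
Your proposal is correct and follows essentially the same route as the paper: specialize the bound of Theorem~\ref{thm:cr_mac} to $K\le n_r$, observe that the supremum is attained by concentrating the budget on the cheapest coordinate, $\mu_1=\frac{n_r(1-r)}{n_r-K+1}$ with $\mu_2=\cdots=\mu_K=0$, compute $\min\{r,\frac{K-1}{n_r-K+1}(1-r)\}$, and let $n_r\to\infty$. The only cosmetic difference is that the paper justifies the concentration step by relaxing ${\cal B}(r)$ to the unweighted-sum set $\overline{\cal B}(r)$ and checking the candidate point attains the relaxed supremum while remaining feasible for the original set, whereas you argue directly on the weighted constraint; your observation that a second active coordinate would have to exceed the dead zone $1-r$ and is therefore infeasible for large $n_r$ closes the gap you flagged.
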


\vspace{3pt}

\begin{proof}
See Appendix~\ref{app:largenr}.
\end{proof}
\begin{figure}[ht!]
\[
\begin{array}{c}
\includegraphics[width=0.8\columnwidth]{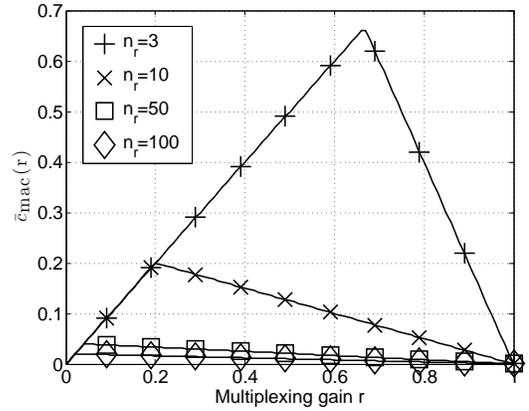}\\
\textnormal{(a)}\\
\includegraphics[width=0.8\columnwidth]{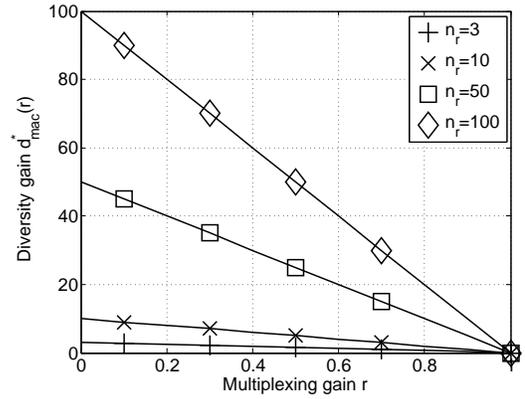}\\
\textnormal{(b)}
\end{array}
\]
\caption{Complexity-exponent upper bound $\bar{c}_\mac(r)$ (see subfigure (a)) for achieving $d^*_{\tm{mac}}(r)$ (see subfigure (b)) for $K=3$ and for increasing $n_r$.}
\label{fig:largenr}
\end{figure}

In Fig. \ref{fig:largenr}(a) we plot the complexity-exponent upper bounds $\bar{c}_\mac(r)$ for $K=3$ users and for various values of $n_r$. The corresponding optimal diversity gains $d^*_{\tm{mac}}(r)$ are shown in Fig. \ref{fig:largenr}(b). It can be easily seen that the bounds  $\bar{c}_\mac(r)$ (with $r$ fixed) decrease monotonically for an increasing $n_r$, and in the limit of very large $n_r$, the complexity exponent approaches $0$, meaning that the decoder ${\cal D}_r$ --- with the proper halting policy --- can deliver the substantial diversity performance $n_r(1-r)$ in a computationally inexpensive manner.

\vspace{3pt}

\section{Performance and Complexity for MAC ML-based Decoding with Feedback-Aided User Selection \label{sec:FB1}}
We now explore the ramifications of feedback-aided \emph{user-selection} on the performance and complexity of ML-based MAC communications. Our motivation in exploring user selection comes from the associated gains in MAC reliability which --- particularly for the high multiplexing-gain region --- is typically quite low. This same method simultaneously allows for a ``simplification" of the communication problem, from a larger multiple access channel, to a selectively reduced smaller and more manageable setting. Finally, this ability to provide simpler and more reliable communications, comes at a very reasonable feedback cost, which renders user selection applicable in the presence of limited feedback links.

We will here consider a MAC user-selection scheme, which is modified from the Jiang and Varanasi (JV) antenna selection algorithm in \cite{JiangVar09}. The JV algorithm, to the best of our knowledge, is currently the best performing antenna-selection algorithm for the point-to-point (single-pair) MIMO scenario. The proposed user selection scheme, to be described in more detail in Appendix~\ref{app:userSelectionScheme}, selects which $L$ out of $K$ users will transmit throughout the coherence period of the channel, and then informs all users of the selection outcome via a feedback channel using $\log_2 \binom{K}{L}$ bits per channel coherence period. This selection decision is taken as a function of the entries of a matrix $\bfR_{\tm{JV}}$ derived from the QR-decomposition $\bfH_{\tm{\!eq}} {\bf \Pi} = {\bf Q}_{\tm{JV}} \bfR_{\tm{JV}}$ where ${\bf \Pi}$ is a permutation matrix, where ${\bf Q}_\tm{JV}$ is a unitary matrix, and where --- depending on the values of $K$, $L$, and $n_r$ --- $\bfR_{\tm{JV}}$ can be either upper triangular or upper trapezoidal. This distinction in the shape of $\bfR_\tm{JV}$ will here limit our consideration of $L$ to an allowable set
\[\mathcal{L} := \{1,2,\cdots,\nu,K\}
\]
where we recall that $\nu = \min\{K,n_r\}$, and where $L=K$ corresponds to the case where no users are pruned out and thus where all $K$ users transmit simultaneously to the receiver. Such choice of $L=K$ can, especially at high values of multiplexing gains, provide for the highest reliability.

We proceed to bound the DMT gains of user selection, and --- based on these bounds --- to then show that any such gains can come with a \emph{simultaneous} exponentially-reduced complexity, compared to the complexity associated to $c^*_\mac(r)$ in Theorem~\ref{thm:cr_mac}.
\begin{note}
We note that under the adopted assumption that the fading statistics are uniform across users, then the user selection algorithm will in fact `prune’ all users (statistically) evenly and thus `fairly’. It is important to note that the rate considered here, does indeed account for the periods of `silence’ of a user that has not been selected. In other words, the derived diversity-multiplexing expressions --- which are the same for all users --- correspond to a user rate that is simply the total number of bits sent by each user, divided by the total number of time-slots including the time-slots during which the user was kept silent by the selection algorithm.
\end{note}


\subsection{Selection-aided MAC DMT bounds, for a fixed $L$}
%

We first proceed with an upper bound on the selection-aided DMT $d_{\tm{us},L}(r)$ for the employed user selection scheme, when the number of selected users is fixed to a certain $L$.
\begin{thm} \label{thm:dus}
Given $L$, and given the JV-inspired user selection scheme, the MAC DMT is upper bounded as
\begin{equation}
d_{\tm{us},L}(r)\leq \bar{d}_{\tm{us},L}(r) := \min_{k \geq 0, , \ell \geq 1 \atop k+\ell \leq L} d_{k,\ell}\left( \frac{\ell K r}{L}  \right)
\end{equation}
where
\begin{IEEEeqnarray*}{rCl}
d_{k,\ell}(r) &:=& \inf_{{\cal A}_\ell(r)} D_{k,\ell}(\bfalpha),\\
{\cal A}_\ell(r) &:=& \left\{ 0 \leq \alpha_1 \leq \cdots \leq \alpha_\ell : \sum_{i=1}^\ell (1-\alpha_i)^+ \leq r \right\}
\end{IEEEeqnarray*}
and
\begin{multline}
D_{k,\ell}(\bfalpha) := \sum_{i=1}^\ell  (n_r + \ell - 2i+1)  \alpha_i
+ \sum_{i=1}^{\ell-1} (K-k-\ell)\alpha_i \\
+\alpha_\ell (K-k-\ell)(n_r-k-\ell+1)\label{eq:pkell}.
\end{multline}
\end{thm}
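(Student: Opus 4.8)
\noindent\textit{Proof strategy.} The claimed inequality is an impossibility (converse) statement: no code--decoder pair operating on top of the fixed JV-inspired selection scheme can exceed the diversity $\bar d_{\tm{us},L}(r)$. The plan is therefore to lower-bound the error probability $P_e$ by the probability of an explicit ``selection-induced outage'' event, constructed separately for each admissible pair $(k,\ell)$, and then to retain the slowest-decaying such bound. The starting point is the standard reduction to outage. Conditioned on $\bfH_{\tm{eq}}$ the selection outcome is deterministic, and the active link is an ordinary $L$-user SIMO MAC with i.i.d.\ Gaussian noise in which each selected user carries effective multiplexing gain $Kr/L$ (the factor $K/L$ compensating for the enforced silent periods, exactly as in the Remark preceding the statement, so that an $\ell$-subset carries total multiplexing gain $\ell Kr/L$). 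For any subset $S$ of the selected users, the MAC converse (Fano after a genie reveals $\bfx_{S^c}$) shows that whenever $I(\bfx_S;\bfY\mid\bfx_{S^c},\bfH_{\tm{eq}})<|S|\,\tfrac{K}{L}\,r\log\snr$, the joint message of the users in $S$ cannot be recovered with vanishing error, so $P_e$ is bounded away from $0$ on that channel realization. Writing $\snr^{-\alpha_i}$ for the ordered eigenvalues of $\bfH_S\bfH_S\hr$, we have $\log\det(\bfI+\snr\,\bfH_S\bfH_S\hr)\doteq\big(\sum_i(1-\alpha_i)^+\big)\log\snr$, so for $|S|=\ell$ the relevant outage region in $\bfalpha$-space is precisely $\mathcal A_\ell(\ell Kr/L)$.

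Next, for fixed $(k,\ell)$ with $\ell\ge1$, $k\ge0$, $k+\ell\le L$, I would construct a bad event $\mathcal E_{k,\ell}$ as the intersection of (i) a weak subset $S$ of $\ell$ users whose eigenvalue exponents $\bfalpha$ lie in $\mathcal A_\ell(\ell Kr/L)$, and (ii) the event that the JV greedy column-pivoting rule applied to $\bfH_{\tm{eq}}$ actually selects all of $S$ in this configuration: roughly, $k$ ``typical'' users are pivoted in before $S$ (costing exponent $0$), and then, at the $\ell$ pivoting steps where the columns of $S$ are brought in, their residual pivots beat the residual norms of the remaining $K-k-\ell$ competing columns, forcing each competitor below the running pivot levels. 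The probability of (i) is governed by the joint Wishart eigenvalue density of the $n_r\times\ell$ matrix $\bfH_S$, whose SNR exponent in the ordering $0\le\alpha_1\le\cdots\le\alpha_\ell$ equals $\sum_{i=1}^{\ell}(n_r+\ell-2i+1)\alpha_i$; the probability of (ii) given (i) is the large-deviation cost of forcing $K-k-\ell$ independent Gaussian columns to be small inside residual subspaces of the relevant dimensions, which contributes the remaining terms $\sum_{i=1}^{\ell-1}(K-k-\ell)\alpha_i + (K-k-\ell)(n_r-k-\ell+1)\alpha_\ell$ (the factor $n_r-k-\ell+1$ being the residual dimension at the last pivoting step for $S$, so that pushing the $K-k-\ell$ competitor residual norms below $\snr^{-\alpha_\ell}$ there costs $(K-k-\ell)(n_r-k-\ell+1)\alpha_\ell$). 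By Laplace's principle, $\prob{\mathcal E_{k,\ell}}$ has SNR exponent $\inf_{\bfalpha\in\mathcal A_\ell(\ell Kr/L)}D_{k,\ell}(\bfalpha)=d_{k,\ell}(\ell Kr/L)$, with $D_{k,\ell}$ exactly as in \eqref{eq:pkell}.

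Since $\mathcal E_{k,\ell}$ forces a decoding error regardless of the code and decoder, $P_e\dotgeq\snr^{-d_{k,\ell}(\ell Kr/L)}$ for every admissible $(k,\ell)$, and therefore $d_{\tm{us},L}(r)=-\lim_{\snr\to\infty}\tfrac{\log P_e}{\log\snr}\le\min_{k\ge0,\ \ell\ge1,\ k+\ell\le L}d_{k,\ell}\!\left(\tfrac{\ell Kr}{L}\right)=\bar d_{\tm{us},L}(r)$, which is the assertion.

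The routine parts are the outage reduction of the first paragraph and the Wishart density exponent. The real work --- and the step I expect to be the main obstacle --- is the exact evaluation of the selection event's exponent in part (ii): one must determine precisely which residual pivot/norm comparisons the JV rule enforces, translate these greedy decisions into events on the mutually dependent singular values of submatrices of $\bfH_{\tm{eq}}$, handle the correlation created by conditioning on the deterministic selection outcome (by partitioning along the greedy decision path and estimating each branch separately, in the spirit of \cite{JiangVar09}), and verify that the resulting exponent is exactly $D_{k,\ell}(\bfalpha)$ rather than something strictly larger, which would loosen the bound. A secondary point requiring care is that $S$ ranges over subsets of the \emph{selected} users, so the enumeration over $(k,\ell)$ must be checked to exhaust all ``shapes'' of selection-induced outage without double counting in the exponent.
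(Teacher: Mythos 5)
Your proposal is correct and follows essentially the same route as the paper's Appendix E: the paper's ``modified outage events'' $\mathcal{O}_{k,\ell}$ (users $u_{k+1},\dots,u_{k+\ell}$ in the JV ordering jointly in outage at sum multiplexing gain $\ell K r/L$) are exactly your events $\mathcal{E}_{k,\ell}$, and the paper likewise obtains the exponent $D_{k,\ell}(\bfalpha)$ as the Wishart eigenvalue-density term plus the large-deviation cost of the pivot-ordering constraints $\abs{r_{i,i}}^2\geq\sum_{m\geq i}\abs{r_{m,j}}^2$ on the $K-k-\ell$ competing columns, before concluding via the union of the $\mathcal{O}_{k,\ell}$ being contained in the true outage event. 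The step you flag as the main obstacle is indeed where the paper does its work (its three Remarks handle the dependence by showing the dominant realizations have $\snr^0$-magnitude entries in the already-pivoted blocks, so the trailing columns can be treated as i.i.d.\ Gaussian), but your identification of the residual dimensions and of each term's origin matches the paper's accounting exactly.
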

\begin{proof}
See Appendix \ref{proof:dus}.
\end{proof}
\vspace{3pt}

We here note that it is not hard to show when $L=K$, that
\beq \label{eq:LisK}
d_{\tm{us},K}(r) = d^*_\mac(r) = \bar{d}_{\tm{us},K}(r) .
\eeq
Also when $n_r = 1$, it is easy to show that
\beq \label{eq:Lis1}
d_{\tm{us},1}(r) = \bar{d}_{\tm{us},1}(r).
\eeq
Hence the bound is tight for $L=1$ and $L=K$.


In Fig. \ref{fig:3} we plot the DMT upper bounds $\bar{d}_{\tm{us},L}(r)$ from Theorem \ref{thm:dus}, for the (under-determined) case of $K=4, n_r=3$, with $L=1,2,3$. In the same figure we compare the above $\bar{d}_{\tm{us},L}(r)$ to the optimal MAC DMT $d^*_\mac(r)$ corresponding to having no selection (or equivalently to having $L=K=4$).
Fig.~\ref{fig:4} plots the DMT bounds for the (over-determined) case of $K=3,n_r=4$, with $L=1,2,3$. In interpreting the figures, one must recall that, due to the fact that each user is selected with probability $\frac{L}{K}$, maintaining an average multiplexing gain $r$, will require that each transmitting user communicates at a multiplexing gain of $\frac{K}{L} r$.
This is indeed taken into account, and the multiplexing gain reflects the true rate of communication, in full consideration of the fact that part of the time, some users are not transmitting.  It should also be noted that for $L \leq \nu$, the maximal achievable multiplexing gain is $\frac{L}{K}$ or each user. Furthermore, it can be seen from the curves in Fig. \ref{fig:3} and Fig. \ref{fig:4} that the DMT performance of proposed feedback-aided user-selection scheme strongly depends on  the choice of  $L$ for different regions of multiplexing gain values $r$. For instance, in Fig. \ref{fig:4} of $K=3$ and $n_r=4$, choosing $L=1$ yields the largest diversity gain whenever $r \in (0, 0.2]$. For $r \in (0.2, 0.5]$, $L=2$ becomes the best choice, and if the desired multiplexing gain $r\in (0.5, 1]$, one should set $L=K=3$. A similar observation can be made in Fig. \ref{fig:3}. These cross-points between curves $\bar{d}_{\tm{us},L}(r)$ for various $L$ suggest further diversity gains by optimizing over $L$ as a function of $r$. This will be discussed in more detail in Section \ref{sec:III.c}.

Another interesting observation from Fig. \ref{fig:3} and Fig. \ref{fig:4} is that despite the different configurations of $K$ and $n_r$ in these figures, the respective initial values $\bar{d}_{\tm{us},L}(0)$ are the same for $L=1,2,3$. This is actually no coincidence. To see it, note that as multiplexing gain $r$ approaches zero from the right, meaning that the desired transmission rate $R$ is arbitrarily close to zero, the only way for the communication system being in outage is exactly when all the {\em major} channel links are in deep-fade, i.e., when $\alpha_1=\ldots=\alpha_\ell=1$ in the set ${\cal A}_\ell(r)$. In this case, the function $D_{k,\ell}(\bfalpha={\bf 1})$ simplifies to
\[
D_{k,\ell}({\bf 1}) = K n_r - k(K+n_r) + k^2 + k \ell,
\]
hence the minimal $D_{k,\ell}({\bf 1}) $ occurs at $\ell=1$ and
\[
\bar{d}_{\tm{us},L}(0) = \min_{0 \leq k \leq L-1} K n_r - k(K+n_r) + k^2 + k.
\]
This shows that the value of $\bar{d}_{\tm{us},L}(0)$ is symmetric between $K$ and $n_r$ whenever $L \leq \min\{K, n_r\}$. Moreover, with $K=4$ and $n_r=3$ (and the same holds for  $K=3$ and $n_r=4$) we have $\bar{d}_{\tm{us},L}(0) =  \min_{0 \leq k \leq L-1} k^2 - 6k + 12$, and it is straightforward to see that $\bar{d}_{\tm{us},1}(0)=12$, $\bar{d}_{\tm{us},2}(0)=7$, $\bar{d}_{\tm{us},3}(0)=4$, and $\bar{d}_{\tm{us},4}(0)=3$ as shown in Fig. \ref{fig:3}.



\begin{figure}[ht!]
\[
\includegraphics[width=0.8\columnwidth]{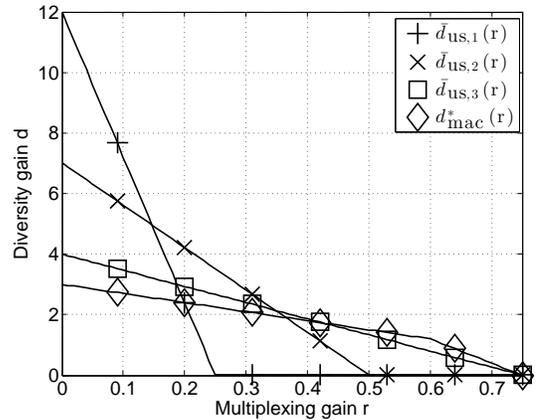}
\]
\caption{Selection-aided DMT upper bounds $\bar{d}_{\tm{us},L}(r)$, vs. the optimal MAC DMT $d^*_\mac(r)$ without selection. $K=4$, $n_r=3$, and $L=1,2,3$.} \label{fig:3}
\end{figure}

\begin{figure}[ht!]
\[
\includegraphics[width=0.8\columnwidth]{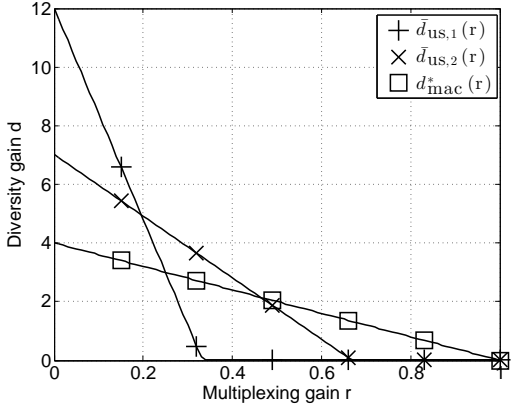}
\]
\caption{Selection-aided DMT upper bounds $\bar{d}_{\tm{us},L}(r)$, vs. the optimal MAC DMT $d^*_\mac(r)$ without selection. $K=3$, $n_r=4$, and $L=1,2$.} \label{fig:4}
\end{figure}

\subsection{Selection-aided MAC complexity bounds, for a fixed $L$}
We now use the previous DMT bound to upper bound the complexity exponent $c_{\tm{us},L}(r)$ required to achieve this selection-aided DMT $d_{\tm{us},L}(r)$ corresponding to the proposed $L$-user selection algorithm, for any fixed $L\in \mathcal{L}$. The result is a direct consequence of Theorem~\ref{thm:cr_mac}, and it holds for i.i.d. Rayleigh fading channel statistics.

\vspace{3pt}

\begin{thm}
\label{thm:cr_us}
For the $K$-user MAC, the complexity exponent $c_{\tm{us},L}(r)$ required to achieve the selection-aided DMT $d_{\tm{us},L}(r)$, for a given $L \leq \nu$, is upper bounded as
\beq
c_{\tm{us},L}(r) \leq \bar{c}_{\tm{us},L}(r)
=
\sup\limits_{\bfalpha \in {\cal F}(r)} \sum_{i=1}^{L}  \left[ \min\left\{ \frac{K}{L} r, \frac{K}{L} r+\alpha_i -1\right\}\right]^+
\label{eq:cusbar}
\eeq
where
\[
{\cal F}(r) := \left\{ \bfalpha:
\begin{array}{l}
\alpha_1 \leq \cdots \leq \alpha_L, 0 \leq \alpha_i \in \R\\
 D_{0,L} (\bfalpha)  \leq \bar{d}_{\tm{us},L}(r)
\end{array} \right\}
\]
and where $D_{k,\ell} (\bfalpha)$ is defined in~\eqref{eq:pkell}. The performance-complexity pair $(d_{\tm{us},L}(r),c_{\tm{us},L}(r))$ is achieved with uncoded QAM ($T=1$), a sphere decoder with a search radius $\delta > \sqrt{\bar{d}_{\tm{us},L}(r) \log \snr}$, any decoding ordering, and a decoding halting policy that halts decoding if $N_{\max}(r) \doteq \snr^{c_{\tm{us},L}(r)}$.
\end{thm}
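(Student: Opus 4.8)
The plan is to treat the scheme, for a fixed admissible $L\le\nu$, as an effective $L$-user SIMO MAC with $n_r$ receive antennas in which each active user transmits at per-user multiplexing gain $r'=\tfrac{K}{L}r$ (so that the time-averaged multiplexing gain, counting the silent slots of the unselected users, is exactly $r$), and then to re-run the complexity analysis underlying Theorem~\ref{thm:mac_cr_for_optimalDMT} with two changes: the parameter playing the role of ``$K$'' becomes $L$, which --- since $L\le\nu\le n_r$ --- places us in the determined/over-determined branch $K\le n_r$ of~\eqref{eq:cbar} (hence no additive $(K-n_r)rT$ penalty appears); and the large-deviations input, namely the exponent of the singular-value profile of the channel the decoder actually sees, is replaced by the one produced by the JV-type selection. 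We take $T=1$ (uncoded QAM), which by Theorem~\ref{thm:qamau} is DMT-optimal for this $L$-user sub-MAC, so restricting to uncoded transmission costs nothing.

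\emph{Geometric side.} First I would recall, from the proof of Theorem~\ref{thm:mac_cr_for_optimalDMT} (in the spirit of \cite{SinghEJ12,JaldenE12}), the purely lattice/geometric bound on the work of a radius-$\delta$ sphere decoder: conditioned on the event that the selected $n_r\times L$ channel $\bfH_{\mathcal S}$, $|\mathcal S|=L$, has ordered squared singular values $\doteq\snr^{-\alpha_i}$ with $0\le\alpha_1\le\cdots\le\alpha_L$, the number of visited lattice points --- and hence, in exponent, the number of flops --- is $\dotleq\snr^{g(\bfalpha)}$, where $g(\bfalpha):=\sum_{i=1}^{L}[\min\{r',\,r'+\alpha_i-1\}]^{+}$ with $r'=\tfrac{K}{L}r$ and $\delta>\sqrt{\bar d_{\tm{us},L}(r)\log\snr}$. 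This bound depends only on the QR structure, the lattice, and the radius, not on the fading law, so it carries over verbatim from Theorem~\ref{thm:mac_cr_for_optimalDMT} with $K\mapsto L$; note that $g(\bfalpha)$ is precisely the summand of~\eqref{eq:cbar} in the branch $K\le n_r$ evaluated at $K\mapsto L$, $r\mapsto r'$, the index reversal $\alpha_i\leftrightarrow\mu_{L+1-i}$ being immaterial since the sum ranges over all indices.

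\emph{Probability side and conclusion.} Next I would bring in the fading statistics through the selection. A sphere decoder halted at $\snr^{c}$ flops can fail to complete the search --- and thus possibly declare a spurious error --- only on realizations of $\bfH_{\mathcal S}$ with $g(\bfalpha)>c$. By the large-deviations analysis carried out in the proof of Theorem~\ref{thm:dus} (Appendix~\ref{proof:dus}), the probability that the JV-selected $L$-column channel has profile $\bfalpha$, $\alpha_1\le\cdots\le\alpha_L$, is $\dotleq\snr^{-D_{0,L}(\bfalpha)}$, the $k=0,\ \ell=L$ instance of~\eqref{eq:pkell}; the gap between $D_{0,L}(\bfalpha)$ and the generic $n_r\times L$ Wishart exponent $\sum_i(n_r+L+1-2i)\alpha_i$ is exactly the benefit of retaining the best of $\binom{K}{L}$ candidate subsets. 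Choosing the halting threshold $N_{\max}(r)\doteq\snr^{\bar c_{\tm{us},L}(r)}$ with $\bar c_{\tm{us},L}(r)=\sup_{\bfalpha\in\mathcal F(r)}g(\bfalpha)$ then forces every $\bfalpha$ with $g(\bfalpha)>\bar c_{\tm{us},L}(r)$ to lie outside $\mathcal F(r)$, i.e.\ to satisfy $D_{0,L}(\bfalpha)>\bar d_{\tm{us},L}(r)\ge d_{\tm{us},L}(r)$, so the halt-induced error term is $\dotleq\snr^{-d_{\tm{us},L}(r)}$. Adding the two remaining error contributions --- the transmitted point leaving the ball of radius $\delta$, which has exponent at least $\bar d_{\tm{us},L}(r)$ by the choice of $\delta$, and the residual ML error given that the search completes with the transmitted point inside the ball, which is $\dotleq\snr^{-d_{\tm{us},L}(r)}$ because uncoded QAM over the selected users achieves $d_{\tm{us},L}(r)$ under ML (the construction of Appendix~\ref{app:userSelectionScheme}, invoking Theorem~\ref{thm:qamau} for the sub-MAC) --- shows that the decoder with this halting policy still attains $d_{\tm{us},L}(r)$, which is the claim; the extension to regularized lattice decoding is automatic by~\cite{SinghEJ12}.

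The step I expect to be the main obstacle is establishing the exponent $D_{0,L}(\bfalpha)$ for the profile of the \emph{selected} channel: this is where the structure of the JV-type algorithm --- its decisions being read off the entries of $\bfR_{\tm{JV}}$, and the fact that the best subset among $\binom{K}{L}$ survives --- must be exploited to argue that, for the retained submatrix to still be singular at level $\bfalpha$, an event of probability $\snr^{-D_{0,L}(\bfalpha)}$ (strictly rarer than the unselected one whenever $K>L$) must occur, and that the relevant case for complexity is $k=0$, $\ell=L$ (unlike the DMT, which also minimizes over smaller $k,\ell$) because it is the full $L$-column selected channel, with no further pruning, that drives the node count. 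A minor secondary point is the routine verification that $\delta>\sqrt{\bar d_{\tm{us},L}(r)\log\snr}$ is large enough to make the ball truncation diversity-lossless and that the per-tree-level node counts are all dominated, in exponent, by the single quantity $g(\bfalpha)$.
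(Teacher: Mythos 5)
Your proof is correct and follows essentially the same route as the paper: the paper's own proof is a one-line appeal to Theorem~\ref{thm:cr_mac} (with $K\mapsto L$, $r\mapsto \frac{K}{L}r$, the target diversity replaced by $\bar d_{\tm{us},L}(r)\ge d_{\tm{us},L}(r)$, and the selected channel's exponent $D_{0,L}$ replacing the Wishart exponent), which is exactly the argument you spell out. Your expanded version --- separating the geometric node-count bound $g(\bfalpha)$ from the $\snr^{-D_{0,L}(\bfalpha)}$ probability bound inherited from Appendix~\ref{proof:dus} --- makes explicit a substitution the paper leaves implicit, so there is nothing to object to.
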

\vspace{3pt}

\begin{proof}
The result holds directly from Theorem~\ref{thm:cr_mac} and from the fact that the complexity needed to achieve $d_{\tm{us},L}(r)$ is upper bounded by the complexity needed to achieve $\bar{d}_{\tm{us},L}(r) \geq d_{\tm{us},L}(r)$.
\end{proof}

\subsection{Bound-based performance and complexity optimization over $L$} \label{sec:III.c}
The performance and complexity bounds in Theorems~\ref{thm:dus} and \ref{thm:cr_us} suggest gains by optimizing over $L$ as a function of $r$.
Using the bounds as an indicator to the best choice of $L$, we let
\beq L^{*}(r) = \arg\max_{L\in \mathcal{L}}\{\bar{d}_{\tm{us},L}(r) \},
\eeq
where we recall that $\mathcal{L}= \{ 1,2,\ldots,\nu,K  \}.$
Hence
\beq
\bar{d}_{\tm{us},L^*}(r) = \max_{L\in \mathcal{L}}\{\bar{d}_{\tm{us},L}(r)\}
\eeq
can serve as an upper bound on the selection-aided DMT $\max_{L\in \mathcal{L}}\{d_{\tm{us},L}(r)\}$ maximized --- at any given $r$ --- over the choices of $L$, including over the choice $L=K$ that all users are selected.
As a result, going back to~\eqref{eq:cusbar}, we can calculate $\bar{c}_{\tm{us},L^*}(r)$ as a bound for the complexity required to achieve the optimized selection-aided DMT $\max_{L\in \mathcal{L}}\{d_{\tm{us},L}(r)\}$.

For the under-determined case of $K=4,n_r=3$, Fig.~\ref{fig:Lstar} plots $L^{*}(r)$, while Fig.~\ref{fig:6} plots the optimized, over $L$, selection-aided DMT upper bound $\bar{d}_{\tm{us},L^*}(r)$. This bound is also compared to the bound $\bar{d}_{\tm{us},L^{**}}(r)$ where
\beq
L^{**}(r) = \arg\max_{L\leq \nu}\{\bar{d}_{\tm{us},L}(r) \}
\eeq
corresponding to the case where selection always happens, i.e., corresponding to the case where we do not allow for $L=K$, even if $L^{*}(r) = K$.

Fig.~\ref{fig:7} plots the complexity exponent upper bound $\bar{c}_{\tm{us},L^*}(r)$ that guarantees the optimal --- over all $L$ --- selection-aided DMT $d_{\tm{us},L^*}(r)$. This complexity exponent bound is compared to $\bar{c}_\mac(r)$ that achieves $d^*_{\tm{mac}}(r)$ without user selection. The same figure also gives an interesting comparison to the complexity exponent $\bar{c}_{\tm{us},L^{**}}(r)$ which, in conjunction with $\bar{d}_{\tm{us},L^{**}}(r)$ in Fig.~\ref{fig:6}, suggests that the choice of $L^{**}(r)$  --- rather than $L^{*}(r)$ --- can provide substantial gains in complexity, with possibly moderate losses in performance.

%
%
%
%
%
%
%

\begin{figure}[ht!]
\[
\includegraphics[width=0.8\columnwidth]{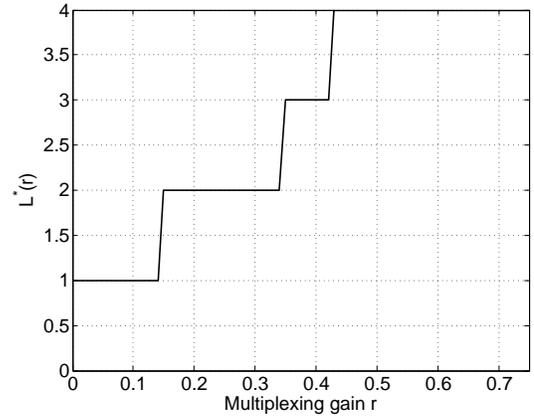}
\]
\caption{Optimal number of selected users $L^*(r)$ for $K=4$-user MAC with $n_r=3$.} \label{fig:Lstar}
\end{figure}


\begin{figure}[ht!]
\[
\includegraphics[width=0.8\columnwidth]{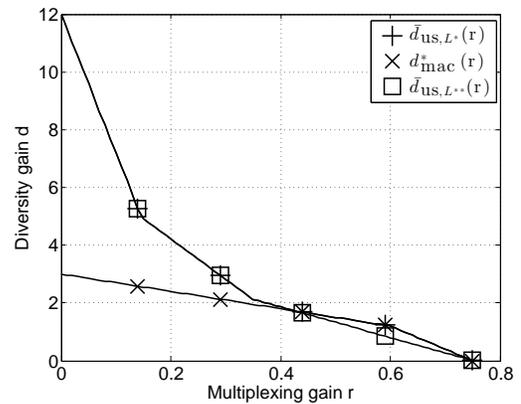}
\]
\caption{DMT upper bounds $\bar{d}_{\tm{us},L^*}(r)$ for user selection, compared to $d^*_\mac(r)$ without selection, and $\bar{d}_{\tm{us},L^{**}}(r)$ when user-selection is enforced, for the $K=4$-user MAC with $n_r=3$.} \label{fig:6}
\end{figure}

\begin{figure}[ht!]
\[
\includegraphics[width=0.8\columnwidth]{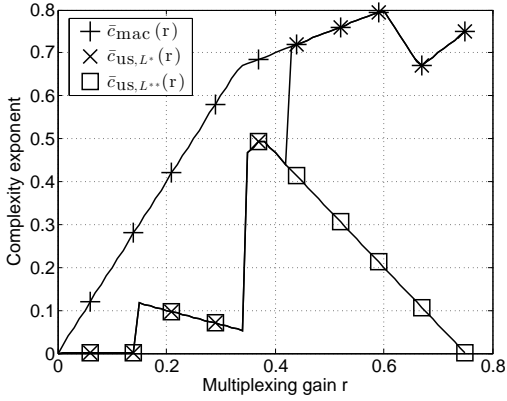}
\]
\caption{Complexity exponent upper bounds $\bar{c}_{\tm{us},L^*}(r)$ for user selection, compared to $\bar{c}_\mac(r)$ without selection, and compared to $\bar{c}_{\tm{us},L^{**}}(r)$ corresponding to when user-selection is enforced. Plots for $K=4$ and $n_r=3$.} \label{fig:7}
\end{figure}

%
%
%

\subsection{Selection-aided complexity reduction, at no performance costs}
We here explore the particular case where user selection is calibrated to reduce complexity, while maintaining reliability. The following quantifies this approach, based on the derived bounds. A crucial element in this effort is that any gains in the selection-aided DMT compared to the original MAC-DMT $d_{\tm{mac}}^*(r)$, can allow us to effectively `rest' the decoder often enough to reduce the complexity exponent, without falling below the required $d_{\tm{mac}}^*(r)$. In this case, the resulting complexity exponent is upper bounded by
\begin{eqnarray}
\lefteqn{\bar{c}_{\tm{red-us},L}(r)}\nonumber\\
&=& \left\{
\begin{array}{l}
\sup\limits_{\bfalpha \in {\cal F}_{\tm{red-us},L}(r)} \sum_{i=1}^L \left[ \min \left\{ \frac{K}{L}r, \frac{K}{L}r + \alpha_i - 1 \right\} \right]^+, \\
\hfill \tm{if } \bar{d}_{\tm{us},L}(r) \geq d_{\tm{mac}}^*(r), \\\\
\bar{c}_{\tm{mac}}(r),
\hfill \tm{ if otherwise},
\end{array} \right.\nonumber\\ \label{eq:credusL}
\end{eqnarray}
where
\[
{\cal F}_{\tm{red-us},L}(r) \ = \ \left\{ \bfalpha : \begin{array}{l}
\alpha_1 \leq \alpha_2 \leq \cdots \leq \alpha_L, 0 \leq \alpha_i \in \R\\
D_{0,L}(\bfalpha) \leq d_{\tm{mac}}^*(r)
\end{array} \right\}
\]
and where the second case in~\eqref{eq:credusL} refers to the situation where $L^* = K$. Hence an upper bound on the associated complexity exponent, minimized over $L$, is given by
\beq
\bar{c}_{\tm{red-us}}(r) \ = \ \min_{L \in {\cal L}} \bar{c}_{\tm{red-us},L}(r).
\eeq
Fig.~\ref{fig:8} explores this for the case of $K=4$ and $n_r=3$, presenting the resulting upper bounds $\bar{c}_{\tm{red-us},L}(r)$ for all possible $L=1,2,3$. The optimized complexity reduction, under the condition that we do not reduce reliability below $d_{\tm{mac}}^*(r)$, is given in Fig.~\ref{fig:9}. The striking observation is that at low or moderate multiplexing gain regimes, the optimal $d_{\tm{mac}}^*(r)$ can be achieved with a complexity exponent that can be as small as $0$, i.e., with computational complexity that does not scale exponentially in the total number of codeword bits. At high multiplexing gains, no complexity reduction is available since $L^* = K$.

\begin{figure}
\[
\includegraphics[width=0.8\columnwidth]{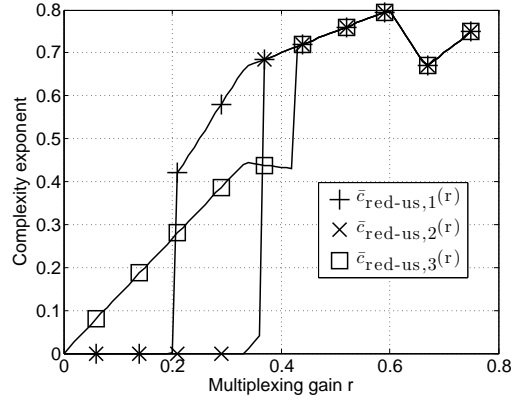}
\]
\caption{Complexity exponent bounds $\bar{c}_{\tm{red-us},L}(r)$ with $L=1,2,3$, all  guaranteeing $d_{\tm{mac}}^*(r)$ performance for $K=4$ ,and $n_r=3$. } \label{fig:8}
\end{figure}

\begin{figure}
\[
\includegraphics[width=0.8\columnwidth]{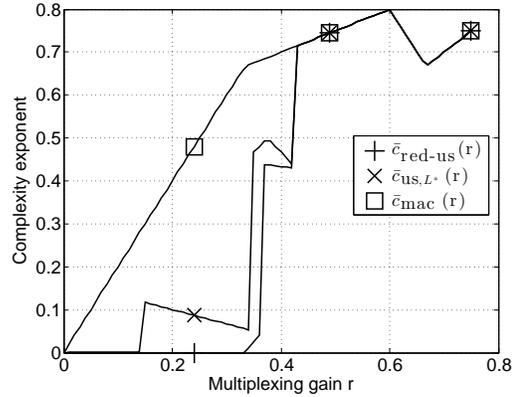}
\]
\caption{Complexity exponent bounds $\bar{c}_{\tm{red-us}}(r)$, $\bar{c}_{\tm{us},L^*}(r)$ and $\bar{c}_\mac(r)$, all guaranteeing $d_{\tm{mac}}^*(r)$ performance for $K=4,n_r=3$.} \label{fig:9}
\end{figure}

\section{Conclusions}
In this paper we derived bounds on the performance and complexity behavior of ML-based (and lattice-based) decoding, for the multiuser multiple access channel. Emphasis on ML-based decoding was motivated by the fact that MAC-related settings have distinctively reduced reliability, which hinders the use of low-performance decoders that would further reduce this reliability.

The derived complexity-vs-performance bounds were presented in the form of diversity and complexity exponents, and can provide insight on how to tradeoff performance and complexity in such outage-limited multiuser settings. The analysis shows that, under the requirement of efficient ML-based decoding, this tradeoff is not trivial; indeed we see that complexity constraints can crucially deteriorate performance.

The derived bounds also suggest substantial reliability and complexity benefits by increasing the number of receive antennas, as well as substantial benefits by utilizing just a few bits of feedback to allow for user selection. User selection is indeed pertinent as it can inherently increase reliability through channel selection, as well as can inherently decrease complexity by simply and selectively reducing the size of the problem at hand. These gains are highly sought in multiuser settings like the MAC, which typically suffer from reduced reliability and increased complexity of decoding.
In this context the analysis reveals the interesting finding that proper calibration of user selection can reduce the complexity exponent of near-optimal ML-based decoding, down to zero, thus revealing that --- for a substantial range of multiplexing gains --- the computational complexity of near-optimal ML-based decoding, need not scale exponentially in the total number of codeword bits.

\appendices
\section{Proof of Theorem~\ref{thm:mac_cr_for_optimalDMT}} \label{app:thm_cr_mac}
In the following we establish an upper bound on the minimum complexity exponent $c_{\tm{mac},d}(r)$ required by ML-based decoding to achieve a MAC DMT performance $d(r)$.

We recall from \eqref{eq:newvect_mac} that the sphere decoder `sees' a channel model
\begin{equation}
\bfy = \bfM \bfs + \bfw \label{eq:newvect_mac2}
\end{equation}
where $\bfM = \snr^{\frac {1-r}{2}} \bfH \bfG$, where $\bfH$ is the equivalent channel in~\eqref{eq:H}, and where $\bfG$ is the generator matrix of the overall (product) lattice code. To allow for sphere decoding in the under-determined ($K > n_r$), MMSE-preprocessing (cf.~\cite{conf_DaElCa04}) gives the MMSE-preprocessed code-channel matrix
\begin{equation}\label{eq:QRdecomposition}
\tilde{\bfM} = \left[
\begin{array}{c}
\bfM\\
\alpha \boldsymbol{I}_u
\end{array} \right] = \boldsymbol{QR}
\end{equation}
where $\boldsymbol{I}_u$ is the $u\times u$ identity matrix, where $u=2(K-n_r)T$, and where $\alpha=\snr^{- \frac r2}$. If $K \leq n_r$, then simply $\alpha=u=0$.
After the QR decomposition in~\eqref{eq:QRdecomposition}, we get
\[
\bfr \ := \ \left( \bfR^\dag \right)^{-1} \bfM^\dag \bfy = \bfR \bfs + \bfw'
\]
where $\bfw' = -\alpha^2 \left( \bfR^\dag \right)^{-1} \bfs + \left( \bfR^\dag \right)^{-1} \bfM^\dag \bfw$. For \[{\mathbb S}_r := \left\{ \bfs \in \Z^{2KT} : \snr^{-\frac r2} \bfG \bfs \in {\cal X}_r\right\}\] being the set of points in the product lattice $\Lambda$ that constitute ${\cal X}_r$ after scaling, the sphere decoder takes the form
\begin{equation}
\hat{\bfs}_{\tm{MMSE-SD}} \ = \ \arg \min_{\bfs \in {\mathbb S}_r} \norm{\bfr - \bfR \bfs}^2
\end{equation}
and it recursively enumerates all candidate vectors $\bfs \in {\mathbb S}_r$ within a given search sphere of radius $\delta = \sqrt{z \log \snr}$ for some $z > d(r)$. 

To compute an upper bound on the complexity exponent, we follow the approach similar to \cite{SinghEJ12}. Towards this, we first let \[\lambda_i=\sigma_i (\bfH_{\tm{\!eq}}^\dag \bfH_{\tm{\!eq}}), \ \ i=1, \ldots, \nu=\min\{ K, n_r\}\] be the nonzero singular values of $\bfH_{\tm{\!eq}}^\dag \bfH_{\tm{\!eq}}$, arranged in ascending order, and then we let
\beq \label{eq:mu}\mu_i = - \frac{\log \lambda_i}{\log \snr}.\eeq
Hence we can write
 \begin{equation}
 \sigma_i(\bfR) \ = \  \sigma_i(\tilde{\bfM}) = \sqrt{\alpha^2 + \sigma_i(\bfM^\dag \bfM)}, \ \ i=1, \ldots, 2KT. \label{eq:mu_heq_ml_maco}
 \end{equation}
 For $K \leq n_r$, we have
\[ \sigma_i(\bfR) \doteq \snr^{\frac 12 (1-r-\mu_{\lceil \frac{i}{2T}\rceil})}\]
and for $K > n_r$ we have
\begin{equation}
\sigma_i(\bfR) \doteq \left\{
\begin{array}{ll}
\snr^{- \frac r2}, & \tm{ if $1 \leq i \leq 2T(K-n_r)$}\\
\snr^{-\frac r2 + \frac 12 (1- \mu_j)^+}, & \tm{ otherwise}
\end{array} \right\} \label{eq:mu_heq_ml_macu}
\end{equation}
where $j = \lceil \frac{i-2T(K-n_r)}{2T}\rceil$. In the above we have used that $\sigma_{i}(\bfG) \doteq \snr^0$, for all $i$.

We now see that for any given channel realization, corresponding to a specific $\boldsymbol{\mu} :=[\mu_1 \cdots \mu_{\nu}]^\top$ (cf.\eqref{eq:mu}), the total number of visited nodes is given by
\[
N_{\tm{SD}}(\boldsymbol{\mu}) := \sum_{k=1}^{2KT} N_k (\boldsymbol{\mu})
\]
where, drawing from \cite[Lemma 1]{JaldenE12}, we can see that
\[
N_k (\boldsymbol{\mu}) \ \leq \
\prod_{i=1}^k \left[ \sqrt{k} + 2 \min \left\{ \frac{\delta}{\sigma_i(\bfR)}, \sqrt{k} \snr^{\frac r2}\right\} \right]
\]
and thus that
\[
N_{\tm{SD}}(\boldsymbol{\mu})  \leq  \sum_{k=1}^{2KT} \prod_{i=1}^k \left[ \sqrt{k} + 2 \min \left\{ \frac{\delta}{\sigma_i(\bfR)}, \sqrt{k} \snr^{\frac r2}\right\} \right].
\]
Thus for the overdetermined case (cf.~\eqref{eq:mu_heq_ml_maco}), we have
\begin{equation}
N_{\tm{SD}}(\boldsymbol{\mu}) \ \dot\leq \ \snr^{T \sum_{i=1}^K \left[ \min\left\{ r, r+\mu_i-1\right\}\right]^+} \label{eq:nsdo}
\end{equation}
while for the underdetermined case (cf.~\eqref{eq:mu_heq_ml_macu}) we have
\begin{equation}
N_{\tm{SD}}(\boldsymbol{\mu}) \ \dot\leq \ \snr^{(K - n_r)rT+T\sum_{i=1}^{n_r}\left(r-(1-\mu_{i})^+\right)^+} \label{eq:nsdu}
\end{equation}
where we have used the $2T$-fold multiplicity of the singular values of $\bfH$.

At this point, in the same spirit as in~\cite{SinghEJ12}, the upper bound on the complexity exponent can be obtained as the solution to a constrained minimization problem of finding a value $c_{\tm{mac},d}(r)$ such that the probability of a premature termination of SD algorithm is no larger than the channel outage probability, i.e.,
\begin{equation}
\Pr\left\{ N_{\tm{SD}}(\boldsymbol{\mu}) \geq N_{\max}(r)=\snr^{c_{\tm{mac},d}(r)}\right\} \leq \snr^{-d(r)}.  \label{eq:nsdout}
\end{equation}
This is the optimization reflected in the complexity exponent bound of~Theorem \ref{thm:mac_cr_for_optimalDMT}.\qed

\section{Proof of Theorem \ref{thm:qamau}: Optimality of Uncoded QAM} \label{app:qamau}

For $T=1$, the rank-$2K$ overall (product) code-lattice $\Lambda$ is isomorphic to the rectangular lattice $\Z^{2K}$, and the overall (product) code
\[
{\cal X}_r \ = \  \left\{ \snr^{-\frac r2} \bfx : \bfx \in \left( \Z[\im]\right)^K, \abs{x_i}^2 \leq \snr^r \right\}
\]
is essentially uncoded (scaled) QAM. To show that the above code ${\cal X}_r$ satisfies the probabilistic complexity constraint \eqref{eq:nsdout}, we follow the footsteps in \cite{Kuser, remarkdmt} and consider a $K$-fold extension of ${\cal X}_r$
\[
{\cal X}_{r,\tm{ext}} = \bigoplus_{i=1}^K {\cal X}_r \subset \snr^{- \frac r2}M_K(\Z[\im])
\]
where $M_K(\Z[\im])$ denotes the set of $K \times K$ matrices with entries from $\Z[\im]$. In other words, elements of ${\cal X}_{r,\tm{ext}}$ are square $K\times K$ matrices whose entries are independent QAM constellation points after a scaling of $\snr^{-\frac r2}$. Naturally, for any given decoder, ${\cal X}_{r,\ext}$ and ${\cal X}_r$ achieve the same DMT performance.  Under brute force ML decoding, the error probability of ${\cal X}_{r,\ext}$ is upper bounded by
\begin{IEEEeqnarray*}{rCl}
\lefteqn{P_{e,\ext}(r)}\\
&=& \E\Pr\left\{ \bfX' \in {\cal E}(\bfX) \tm{ decoded}\right\}\\
&=& \E \Pr\left\{ \bigcup_{k=1}^K \left\{ \bfX' \in {\cal E}_k(\bfX) \tm{ decoded}\right\}\right\}\\
&\stackrel{\tm{(i)}}{\leq} & \sum_{k=1}^K \E \Pr\left\{ \bfX' \in {\cal E}_k(\bfX) \tm{ decoded}\right\}\\
&\stackrel{\tm{(ii)}}{\leq} & \sum_{k=1}^K \E \Pr\left\{ \bigcup_{\bfX' \in {\cal E}_k(\bfX)} \biggl\{ \bfH_{\tm{\!eq}} :  \right.\\
&& \qquad \qquad \qquad \left. \norm{\snr^{\frac 12} \bfH_{\tm{\!eq}} (\bfX-\bfX')}^2 \dot\leq 1 \right\}\Biggr\}
\yesnumber \label{eq:Pe19}
\end{IEEEeqnarray*}
where the expectation is taken over all codeword matrices $\bfX \in {\cal X}_{r,\ext}$, where ${\cal E}(\bfX):={\cal X}_{r,\ext}\setminus\{\bfX\}$ is the set of all possible erroneous decoded outputs given that codeword matrix $\bfX$ is transmitted, where ${\cal E}_k(\bfX):=\left\{ \bfX' \in {\cal E}(\bfX) : \rank(\bfX-\bfX')=k\right\}$ with $k=1, \ldots, K$ forms a partition of ${\cal E}(\bfX)$, where step (i) follows from the union bound, and where step (ii) is due to the use of a suboptimal bounded distance decoder (cf. \cite{Kuser}).

For any $\bfX' \in {\cal E}_k(\bfX)$, set $\boldsymbol{\Delta}_{\bfX'} = \bfX - \bfX'$ and let $\boldsymbol{\Delta}_{\bfX'} \boldsymbol{\Delta}_{\bfX'}^\dag = \boldsymbol{U}_{\bfX'} \boldsymbol{\Sigma}_{{\bfX'}} \boldsymbol{U}_{\bfX'}^\dag$ be the corresponding eigen-decomposition. Note that as $\rank(\boldsymbol{\Delta}_{\bfX'}) = k$, the eigenvalue matrix $\boldsymbol{\Sigma}_{\bfX'}$ has form $\boldsymbol{\Sigma}_{\bfX'}=\tm{diag}(\boldsymbol{\Omega}_{\bfX'},\boldsymbol{0}_{K-k})$, where the diagonal of $\boldsymbol{\Omega}_{\bfX'}$ consists of all nonzero eigenvalues of $\boldsymbol{\Delta}_{\bfX'}\boldsymbol{\Delta}_{\bfX'}^\dag$. Substituting the above into \eqref{eq:Pe19}, we obtain
\begin{multline}
\Pr\left\{ \bigcup_{\bfX' \in {\cal E}_k(\bfX)} \left\{ \bfH_{\tm{\!eq}} : \norm{\snr^{\frac 12} \bfH_{\tm{\!eq}} \boldsymbol{\Delta}_{\bfX'}}^2 \dot\leq 1 \right\}\right\}\\
=\Pr\left\{ \bigcup_{\bfX' \in {\cal E}_k(\bfX)} \left\{ \bfG_k : \snr \cdot \trace\left(\bfG_k^\dag \boldsymbol{\Omega}_{\bfX'}\bfG_k\right) \dot\leq 1 \right\}\right\}
\end{multline}
where $\bfG_k$ is an $(n_r \times k)$ random matrix with i.i.d. $\CN{0}{1}$ entries. Let $\lambda_1 \leq \cdots \leq \lambda_m$ be the ordered nonzero eigenvalues of $\bfG_k^\dag \bfG_k$, where $m=\min\{k,n_r\}$. Noting that $\det(\boldsymbol{\Omega}_{\bfX'})\, \dot\geq\, \snr^{-kr}$ and that $\trace(\boldsymbol{\Omega}_{\bfX'})\, \dot\leq \, 1$, it can be shown --- drawing from \cite{Kuser} --- that the condition of $\trace\left(\bfG_k^\dag \boldsymbol{\Omega}_{\bfX'}\bfG_k\right) \, \dot\leq\,  1$ implies that $\sum_{i=1}^m (1-\mu_i)^+ \leq kr$, which is independent of the choice of $\bfX'$. Hence we have
\begin{IEEEeqnarray*}{rCl}
\lefteqn{\Pr\left\{ \bigcup_{\bfX' \in {\cal E}_k(\bfX)} \left\{ \bfG_k : \snr \cdot \trace\left(\bfG_k^\dag \boldsymbol{\Omega}_{\bfX'}\bfG_k\right) \dot\leq 1 \right\}\right\}}\\
&\dot\leq& \Pr\left\{ \bigcup_{\bfX' \in {\cal E}_k(\bfX)} \left\{ \boldsymbol{\mu} : \sum_{i=1}^m (1-\mu_i)^+ \leq kr\right\}\right\}\\
&=& \Pr\left\{ \boldsymbol{\mu} : \sum_{i=1}^m (1-\mu_i)^+ \leq kr\right\}\\
& \doteq & \snr^{-d_{k,n_r}^*(kr)}
\end{IEEEeqnarray*}
where the last exponential equality follows from \cite{ZheTse}. Finally, note that the error probability of ${\cal X}_r$, subject to (unterminated) joint ML decoding, is upper bounded by
\[
P_e(r) \leq \frac{1}{K} P_{e,\ext}(r) \,\dot\leq\, \frac{1}{K} \sum_{k=1}^K \snr^{-d_{k,n_r}^*(kr)} \, \doteq \, \snr^{-d^*_\mac(r)},
\]
which completes the proof.\qed

\section{Proof of Corollary~\ref{cor:largenr} \label{app:largenr}}

First let us recall that for $n_r > K$, then $\nu=\min\{K,n_r\}=K$. Let us also recall from \cite{ZheTse} that the joint probability density function of $\boldsymbol{\mu}=[\mu_1\, \cdots\, \mu_K]^\top$ (cf.~\eqref{eq:mu}) satisfies
\[
p \left( \boldsymbol{\mu}\right) \doteq \snr^{-\sum_{i=1}^K (n_r-K+2i-1) \mu_i}
\]
provided that $\mu_i \geq 0$ for all $i$. As a result, for each $\boldsymbol{\mu} \in (\R^+)^K$, the corresponding complexity exponent is upper bounded by
\beq \label{eq:cr upper bound app}
\bar{c}\left( \boldsymbol{\mu}\right) = \sum_{i=1}^K \left[ \min\{r,r+\mu_i-1\}\right]^+
\eeq
which is \emph{not a function} of $n_r$. At this point, let us consider a decoder that decodes only when
\[
\sum_{i=1}^K (n_r-K+2i-1) \mu_i < d_{1,n_r}^*(r)
\]
while when $\sum_{i=1}^K (n_r-K+2i-1) \mu_i > d_{1,n_r}^*(r)$ the decoder simply declares an error. Since, for $n_r>K$, $d_{1,n_r}^*(r)$ dominates the MAC DMT (cf.~\cite{TseVisZhe}), the extra declared errors do not affect the overall diversity performance.

Now we note that for any $\mu_1\geq \cdots \geq \mu_K \geq 0$ such that $\sum_i (n_r-K+2i-1) \mu_i = n_r(1-r)$, we have that
\[
\sum_i (n_r-K+1) \mu_i \leq n_r (1-r) \leq \sum_i (n_r-K+2K-1) \mu_i,
\]
and thus that
\beq
\frac{n_r}{n_r+K-1}(1-r) \leq \sum_i \mu_i \leq \frac{n_r}{n_r-K+1} (1-r) \label{eq:applarnr1}
\eeq
which, for $n_r \gg K$, implies that
\[
\frac{n_r}{n_r+K-1}(1-r) \approx \frac{n_r}{n_r-K+1} (1-r) \approx (1-r).
\]
Subject to the constraint in~\eqref{eq:applarnr1}, we can see that the maximum value of $\bar{c}(\boldsymbol{\mu})$ in~\eqref{eq:cr upper bound app}, is achieved by setting $\mu_1=\frac{n_r}{n_r-K+1} (1-r)$ and $\mu_2=\cdots = \mu_K=0$, where this maximal value takes the form
\[
\sum_{i=1}^K \left[ \min\{r,r+\mu_i-1\}\right]^+ = \min \left\{ r, \frac{K-1}{n_r-K+1}(1-r)\right\}.
\]
Hence for $n_r \gg K$,
\bea
\bar{c}_\mac(r) &=& \sup\limits_{\boldsymbol{\mu} \in {\cal B}(r)} \sum_{i=1}^{K}  \left[ \min\left\{ r, r+\mu_i -1\right\}\right]^+\nonumber\\
&\leq & \sup\limits_{\boldsymbol{\mu} \in \overline{{\cal B}}(r)} \sum_{i=1}^{K}  \left[ \min\left\{ r, r+\mu_i -1\right\}\right]^+ \label{eq:applargenr2}\\
&=& \min \left\{ r, \frac{K-1}{n_r-K+1}(1-r)^+\right\}\nonumber
\eea
where
\[
{\cal B}(r) := \left\{ \boldsymbol{\mu}:
\begin{array}{l}
\mu_1 \geq \cdots \geq \mu_{K}, 0 \leq \mu_i \in \R,\\
\ \sum_{i=1}^{K} \left( K-n_r+2i-1\right)\mu_i \leq n_r(1-r)^+
\end{array} \right\}
\]
and
\[
\overline{{\cal B}}(r) := \left\{ \boldsymbol{\mu}:
\begin{array}{l}
\mu_1 \geq \cdots \geq \mu_{K}, 0 \leq \mu_i \in \R\\
\sum_{i=1}^{K} \mu_i \leq \frac{n_r}{n_r-K+1}(1-r)^+
\end{array} \right\}.
\]
The inequality in \eqref{eq:applargenr2} is due to the fact that ${\cal B}(r) \subseteq \overline{{\cal B}}(r)$, while note that equality holds when $\mu_1=\frac{n_r}{n_r-K+1} (1-r), \ \mu_2=\cdots = \mu_K=0$, which in turn implies that
\[
\bar{c}_\mac(r) =  \min \left\{ r, \frac{K-1}{n_r-K+1}(1-r)^+\right\}
\]
which establishes the corollary.\qed


\section{Description of User-Selection scheme\label{app:userSelectionScheme}}
The proposed user selection algorithm can be drawn from \cite{JiangVar09}, by restricting selection to only transmit antennas (no selection of received antennas). In the setting with $K$ single-antenna users, $\bfH_{\tm{\!eq}}$ (cf. \eqref{eq:H}) has as $k$th column the vector $\bfh_k$ corresponding to the channel vector of the $k$th user.  The goal is to select $L$ out of $K$ users for transmission ($L \leq \nu=\min\{K,n_r\}$). The selection process is closely related to the QR-decomposition of $\bfH_{\tm{\!eq}}$ using a Householder transformation, which exhibits good numerical stability. The selection takes $L$ iterations, and each iteration consists of two steps, a matrix-column permutation followed by a Householder transformation. At the first iteration, the algorithm begins by finding the column $\bfh_{j_1}$ of $\bfH_{\tm{\!eq}}$ with the largest column norm. Then it right-multiplies $\bfH_{\tm{\!eq}}$ by a permutation matrix $\bfPi_1$ to swap $\bfh_1$ and $\bfh_{j_1}$. The second step is to apply a unitary Householder transformation $\bfQ_1$ to $\bfH_{\tm{\!eq}} \bfPi_1$ such that the top-left entry of $\bfQ_1 \bfH_{\tm{\!eq}}\bfPi_1$ is the only nonzero (and positive) entry in the first column. After finishing with the first iteration, the algorithm shifts its focus to the trailing $(n_r-1) \times (K-1)$ submatrix $\bfH_1$ of $\bfQ_1 \bfH_{\tm{\!eq}} \bfPi_1$. Similar to the first iteration, the algorithm identifies the column with the largest norm in $\bfH_1$, swaps it with the first column, and then applies the Householder transformation. Thus, at the end of this iteration, the channel matrix becomes $\bfQ_2 \bfQ_1 \bfH_{\tm{\!eq}} \bfPi_1 \bfPi_2$. The third iteration is the same as the previous ones but focuses on the trailing $(n_r-2) \times (K-2)$ submatrix. The same process is repeated $L$ times, resulting in an output matrix $\bfR_{\tm{JV}}=\bfQ_L \cdots \bfQ_1 \bfH_{\tm{\!eq}} \bfPi_1 \cdots \bfPi_L$. The users associated to the first  $L$ columns of $\bfR_{\tm{JV}}$ are the selected ones~\footnote{It is worth mentioning that if $K=n_r=L$, then $\bfR_{\tm{JV}}$ is an upper triangular matrix, and thus $\bfH_{\tm{\!eq}} \bfPi_1  \cdots \bfPi_L$ has a QR-decomposition equal to $(\bfQ_L \cdots \bfQ_1)^\dag \bfR_{\tm{JV}}$. Also, as the selection focuses only on a trailing submatrix of $\bfQ_{m-1}\cdots \bfQ_1 \bfH_{\tm{\!eq}} \bfPi_1 \cdots \bfQ_{m-1}$ during the $m$-th iteration, one cannot say that the $i$th column of $\bfR_{\tm{JV}}$ has the $i$th largest norm among all columns, for $i=2,\cdots,L$.}.

\section{Proof of Theorem \ref{thm:dus}} \label{proof:dus}

Recalling from the description in Appendix~\ref{app:userSelectionScheme} of the selection algorithm, to select the $L$ users, there are $L$ iterations of applying column permutation matrix $\mathbf{\Pi}_i$ and Householder transformation $\bfQ_i$ to obtain an output matrix $\bfR_{\tm{JV}}$, which is of the form
\begin{IEEEeqnarray*}{rCl}
\bfR_{\tm{JV}} &=& \bfQ_L \cdots \bfQ_1 \bfH_{\tm{\!eq}} \mathbf{\Pi}_1 \cdots \mathbf{\Pi}_L\\
&=&
\left[\begin{array}{ccccccc}
r_{1,1} & * &  \cdots & * &* & \cdots  & * \\
 & r_{2,2}  & \cdots & * & * &\cdots & *\\
 &   & \ddots & \vdots & \vdots & \vdots & \vdots\\
 &   &  & r_{L,L} & * &\cdots & * \\
 &   &  & & \vdots & \vdots & \vdots \\
 &   &  & & * & \cdots & *
\end{array}\right].
\end{IEEEeqnarray*}
Let $u_i$ be the user associated with the $i$th column of $\bfR_{\tm{JV}}$, and let $\{u_1, u_2, \ldots, u_L\}$ be the set of selected users. Since entries of $\bfH_{\tm{\!eq}}$ are i.i.d. $\CN{0}{1}$, to meet an average multiplexing gain $r$ for each user, the selected user has to transmit at a possibly larger multiplexing gain of $\frac{K}{L} r$. For the set of `modified' outage events
\begin{equation}
{\cal O}_{k,\ell} : =  \left\{ \tm{the sum-rate of users $u_{k+1}, \cdots, u_{k+\ell}$ is in outage} \right\}
\end{equation}
for all $k \geq 0$, $\ell \geq 1$, and $k+\ell \leq L$, we will see later on --- in the process of the proof --- that the outage event considered by Jiang and Varanasi~\cite[Theorem 4.1]{JiangVar09}, is a special case of the above events corresponding to $k=0$ and $\ell=L$, i.e., a special case of the outage event ${\cal O}_{0,L}$.

\subsection{DMT Analysis for error event  ${\cal O}_{k,\ell}$}

To analyze the DMT for the error event ${\cal O}_{k,\ell}$ for any $k\geq 0$, $\ell \geq 1$, and $k+\ell \leq L$, let $\bfR_{k,\ell}$ be the matrix resulting from applying $(k+\ell)$ iterations of the Jiang-Varanasi algorithm to overall matrix $\bfH_{\tm{\!eq}}$. We partition matrix $\bfR_{k,\ell}$ as follows:
\bea
\bfR_{k,\ell} &=& \bfQ_{k+\ell} \cdots \bfQ_1 \bfH_{\tm{\!eq}} \mathbf{\Pi}_1 \cdots \mathbf{\Pi}_{k+\ell}\nonumber\\
&=& \left[
\begin{array}{ccc}
\bfR_{L} & \bfR_{C,U} & \bfR_{R,U}\\
		 & \bfR_{C,B} & \bfR_{R,M}\\
& & \bfR_{R,B}
\end{array} \right]
\eea
where
\[
\bfR_{L}  =\left[\begin{array}{ccc}
r_{1,1} & \cdots & r_{1,k} \\
& \ddots & \vdots \\
& & r_{k,k}
\end{array} \right]
\]
\[
\bfR_{C,U} = \left[\begin{array}{ccc}
r_{1,k+1} & \cdots & r_{1,k+\ell} \\
\vdots & \vdots & \vdots \\
r_{k,k+1}& & r_{k,k+\ell}
\end{array} \right]
\]
\[
\bfR_{C,B} = \left[
\begin{array}{ccc}
r_{k+1,k+1} & \cdots & r_{k+1,k+\ell}\\
& \ddots & \vdots \\
& & r_{k+\ell,k+\ell}
\end{array} \right]
\]
\[
\bfR_{R,U} = \left[
\begin{array}{ccc}
r_{1,k+\ell+1} & \cdots & r_{1,K}\\
\vdots & \ddots & \vdots \\
r_{k,k+\ell+1} & \cdots & r_{k,K}
\end{array} \right]
\]
\[
\bfR_{R,M} = \left[
\begin{array}{ccc}
r_{k+1,k+\ell+1} & \cdots & r_{k+1,K}\\
\vdots & \ddots & \vdots \\
r_{k+\ell,k+\ell+1} & \cdots & r_{k+\ell,K}
\end{array}\right]
\]
\[
\bfR_{R,B} = \left[
\begin{array}{ccc}
r_{k+\ell+1,k+\ell+1} & \cdots & r_{k+\ell+1,K}\\
\vdots & \ddots & \vdots \\
r_{n_r,k+\ell+1} & \cdots & r_{n_r,K}
\end{array}\right]
\]
and where the entries satisfy
\begin{equation}
\abs{r_{i,i}}^2 \ \geq \ \sum_{m=i}^{n_r} \abs{r_{m,j}}^2, \label{eq:order1}
\end{equation}
for $i=1, 2, \ldots, k+\ell$ and for $j=i+1, \ldots, K$.

For
\[
\bfR_{C} := \left[
\begin{array}{c}
\bfR_{C,U}\\
\bfR_{C,B}
\end{array} \right]
\]
the probability that the sum-rate of users $u_{k+1}, \cdots, u_{k+\ell}$ results in an outage, takes the form
\begin{IEEEeqnarray}{rCl}
\lefteqn{\Pr\left\{ {\cal O}_{k,\ell}\right\}}\nonumber\\
&=& \Pr \left\{ \log \det \left( \boldsymbol{I}_{n_r} + \snr \bfR_C \bfR_C^\dag \right) \ < \ \frac{K}{L} \ell r \log \snr \right\} \nonumber\\
&=& \int_{{\cal I}(r)} c \cdot \left[ \prod_{i=1}^{k+\ell} p_{\chi_{2(n_r-i+1)}^2}\left(\abs{r_{i,i}}^2 \right) \left( \prod_{j=i+1}^K \frac{1}{\pi} e^{-\abs{r_{i,j}}^2} \right)\right]\nonumber\\
&& \hspace{0.8in} \times  \frac{1}{\pi^{(K-k-\ell)(nr-k-\ell)}} e^{-\norm{\bfR_{R,B}}^2}
\, d \bfR_{k,\ell} \label{eq:int}\\
& : \doteq & \snr^{-d_{k,\ell}\left( \frac{K}{L}\ell r\right)}
\end{IEEEeqnarray}
where $c$ is a constant relating to ordered statistics (cf. the first constraint in \eqref{eq:Dr}), where $p_{\chi_\kappa^2}(\cdot)$ is the probability density function for a $\chi^2$ random variable with $\kappa$ degrees of freedom and with mean $\frac{\kappa}{2}$, and where the integration region is
\begin{equation}
{\cal I}(r)  :=  \left\{ \bfR_{k,\ell}:
\begin{array}{l}
\abs{r_{i,i}}^2 \ \geq \ \sum_{m=i}^{n_r} \abs{r_{m,j}}^2, \\
\quad i=1, \ldots, k+\ell, \quad j=i+1, \ldots, K\\
\det \left( \boldsymbol{I}_{n_r} + \snr \bfR_C \bfR_C ^\dag \right) <  \snr^{\frac{K}{L} \ell r}
 \end{array} \right\}. \label{eq:Dr}
\end{equation}
We have the following three remarks which we will jointly use later on.

\begin{note}\label{remark:app_e_1}
We first note that the diversity exponent associated to $\Pr\left\{ {\cal O}_{k,\ell}\right\}$ from~\eqref{eq:int}, takes the form
\[
d_{k,\ell}\left( \frac{K}{L}\ell r\right)=\lim_{\snr \to \infty} - \frac{\log \Pr\left\{ {\cal O}_{k,\ell}\right\}}{\log \snr} = \inf_{{\bf R}_{k,\ell} \in {\cal I}(r)} D \left( \bfR_{k,\ell} \right)
\]
where the last equality is a manifestation of Laplace's principle \cite{ZheTse}, and where $D \left( \bfR_{k,\ell} \right)$ is a specific diversity-exponent function corresponding to the integrand in~\eqref{eq:int}. To understand the above, we shall look for sub-events of ${\cal I}(r)$ that yield the smallest possible diversity exponent $D \left( \bfR_{k,\ell} \right)$. To this end, note that the entries in $\bfR_L$ and $\bfR_{R,U}$ are not involved in the second constraint in ${\cal I}(r)$, i.e., are not involved in the constraint $\det \left( \boldsymbol{I}_{n_r} + \snr \bfR_C \bfR_C ^\dag \right) <  \snr^{\frac{K}{L} \ell r}$. Furthermore one can see that smaller absolute values of nonzero entries in $\bfR_L$ and $\bfR_{R,U}$ correspond to larger values of $D \left( \bfR_{k,\ell} \right)$. We hence conclude that the dominant sub-event of ${\cal I}(r)$, and the one yielding the smallest diversity exponent $D \left( \bfR_{k,\ell} \right)$), must consist of matrices $\bfR_{k,\ell}$ for which the nonzero entries of the associated submatrices $\bfR_L$ and $\bfR_{R,U}$ are the most typical ones. In other words, the smallest value of $D \left( \bfR_{k,\ell} \right)$ must result from the case when the nonzero entries of submatrices $\bfR_L$ and $\bfR_{R,U}$ have magnitudes in the order of $\snr^0$.
\end{note}

\begin{note}
Let $\lambda_1 \geq \cdots \geq \lambda_\ell$ be the ordered singular values of $\bfR_C$, and let $\nu_1 \geq \cdots \geq \nu_\ell$ be the ordered singular values of $\bfR_{C,B}$. Clearly, as $\bfR_{C,B} \bfR_{C,B}^\dag \preceq \bfR_C \bfR_C^\dag$, we have
\begin{equation}
\nu_i^2 \leq \lambda_i^2, \quad i=1,2,\ldots,\ell. \label{eq:mu_lambda}
\end{equation}
Using~\cite[Lemma 3.3]{JiangVar09}, we have
\begin{equation}
\left(\bfR_{C,B}\right)_{i,i} \ =\  r_{k+i,k+i}^2 \ \geq \ \frac{\sum_{j=i}^\ell \nu_j^2}{\ell-i+1} \ \doteq \ \nu_i^2  \label{eq:lem3.3}
\end{equation}
for $i=1,2,\ldots,\ell$.
Moreover, from \cite[Eq.(27)]{JiangVar09} we have that the squared diagonal elements in $\bfR_{C,B}$ are multiplicatively majorized by its squared singular values, i.e., that
\begin{equation}
\prod_{i=1}^m \nu_i^2 \ \geq \ \prod_{i=1}^m r_{k+i,k+i}^2 \ \dot\geq \ \prod_{i=1}^m \nu_i^2,\quad m=1,2,\ldots,\ell
\end{equation}
where the second dotted inequality is due to~\eqref{eq:lem3.3}. It then follows that
\begin{equation}
r_{k+i,k+i}^2 \doteq \nu_i^2 \label{eq:rmu}
\end{equation}
for $i=1,2,\ldots,\ell$.
\end{note}

\begin{note}
After the first $k$ iterations of the Jiang-Varanasi algorithm, we get
\begin{multline*}
\bfQ_{k} \cdots \bfQ_1 \bfH_{\tm{eq}} \mathbf{\Pi}_1 \cdots \mathbf{\Pi}_k \\
= \left[
\begin{array}{ccccccc}
r_{1,1} & * &  \cdots & * &* & \cdots  & * \\
 & r_{2,2}  & \cdots & * & * &\cdots & *\\
 &   & \ddots & \vdots & \vdots & \vdots & \vdots\\
 &   &  & r_{k,k} & * &\cdots & * \\
 &   &  & & \vdots & \vdots & \vdots \\
 &   &  & & * & \cdots & *
\end{array}\right]\\ \ = \ \left[
\begin{array}{cc}
\begin{array}{c}
\bfR_{L} \\
\boldsymbol{0}
\end{array}
& \bfR_R
\end{array}\right]
\end{multline*}
where $\bfR_R$ is the $(n_r \times (K-k))$ matrix consisting of the rightmost $(K-k)$ columns of the above matrix. In Remark \ref{remark:app_e_1} we argued that the dominant matrices $\bfR_{k,\ell}$ in event ${\cal I}(r)$ --- yielding the smallest possible diversity exponent $D\left( \bfR_{k,\ell}\right)$ --- must have submatrices $\bfR_L$ whose nonzero entries are of magnitude that is in the order of $\snr^0$. This implies that the entries of $\bfR_R$ can still be regarded as i.i.d. $\CN{0}{1}$ random variables since any ensemble of random $\CN{0}{1}$ variables has the same asymptotic probability as those subject to an additional constraint of magnitude lesser than $\snr^0$. Consequently, the ordered singular values $\lambda_i, i=1, \ldots,\ell$, of matrix $\bfR_C$ are still of the same joint probability density function as that for the ordered singular values of an $(n_r \times \ell)$ random matrix with i.i.d. $\CN{0}{1}$ entries.
\end{note}

With the above three remarks in place, we now proceed to analyze the integral~\eqref{eq:int} to obtain a formula for the DMT function $d_{k,\ell}(r)$. Specifically, we will show that
\begin{multline}
d_{k,\ell}(r) = \inf_{{\cal A}_\ell(r)} \biggl\{ \sum_{i=1}^\ell  (n_r + \ell - 2i+1)  \alpha_i +
\sum_{i=1}^{\ell-1} (K-k-\ell)\alpha_i\\
+\alpha_\ell (K-k-\ell)(n_r-k-\ell+1)
\biggr\} \label{eq:duk}
\end{multline}
where
\[
{\cal A}_\ell(r) = \left\{ 0 \leq \alpha_1 \leq \cdots \leq \alpha_\ell : \sum_{i=1}^\ell (1-\alpha_i)^+ \leq r \right\}.
\]
To see the above, set $\lambda_i^2 \doteq \snr^{-\alpha_i}$ with $\alpha_1 \leq \alpha_2 \cdots \leq \alpha_\ell$. The first summand appearing in \eqref{eq:duk} follows from the aforementioned fact that the joint probability density function of ordered singular values $\lambda_1 \geq \cdots \geq \lambda_\ell$ for an $(n_r \times \ell)$ matrix with i.i.d. $\CN{0}{1}$ entries is (cf.~\cite{ZheTse, JiangVar09})
\[
p(\alpha_1, \cdots, \alpha_\ell) \ \doteq \ \snr^{- \sum_{i=1}^\ell  (n_r + \ell - 2i+1)  \alpha_i }.
\]
Also by \eqref{eq:order1}, \eqref{eq:mu_lambda}, and \eqref{eq:rmu}, we have the following constraints for the entries in matrix $\bfR_{k,\ell}$:
\begin{enumerate}
\item For $i=1, \ldots, \ell-1$, the entries $r_{k+i,j}$ ($j=k+\ell+1, \ldots, K$) must satisfy
\[
\abs{r_{k+i,j}}^2 \leq \abs{r_{k+i,k+i}}^2 \leq \lambda_i^2.
\]
The constraints on $r_{k+i,j}$ contribute to \eqref{eq:duk} the second summation, i.e. the term $\sum_{i=1}^{\ell-1} (K-k-\ell)\alpha_i$. \medskip

\item Entries $r_{k+i,j}$ with $i=\ell, \ldots, n_r-k$ and $j=k+\ell+1, \ldots, K$, must satisfy
\[
\sum_{i=\ell}^{n_r-k} \abs{r_{k+i,j}}^2 \leq \abs{r_{k+\ell,k+\ell}}^2 \doteq \mu_\ell^2 \leq \lambda_\ell^2
\]
Such constraints contribute to \eqref{eq:duk} the last summand, i.e., the term $\alpha_\ell (K-k-\ell)(n_r-k-\ell+1) $.
\end{enumerate}
This completes the proof of~\eqref{eq:duk}.
$\\$
Finally, the proof of Theorem~\ref{thm:dus} is complete after noting that the union of outage events ${\cal O}_{k,\ell}$, is a subset of the overall outage event, i.e., that
\[
\bigcup_{k \geq 0, \ell \geq 1, \atop k+\ell \leq L} {\cal O}_{k,\ell}
\subseteq \bigcup_{{\cal U} \subset \{u_1, \ldots, u_L\}} \left\{ \boldsymbol{H}_{\tm{eq}} : \tm{ users in $\cal U$ are in outage}\right\}.
\]
\qed

\begin{note}
It is interesting to see that the antenna-selection DMT in Jiang and Varanasi (\cite[Theorem 4.1]{JiangVar09}), can be derived as a special case of~\eqref{eq:duk}. Specifically it can be shown that when $k=0$ and $\ell=L \leq \nu$, the corresponding $d_{0,L} (r)$ coincides with the DMT in \cite{JiangVar09}, as both functions are piecewise linear, connecting the following $(P+2)$ points
\begin{equation}
(r,(K-r)(n_r-r)), r=0,1,\ldots,P, \tm{ and } (L,0),
\end{equation}
where
\begin{equation}
P \ = \ \arg \min_{p=0,1,\ldots,L-1} \frac{(K-p)(n_r-p)}{L-p}.
\end{equation}
\end{note}

\section*{Acknowledgments}
The research leading to these results has received funding under the European Community's Seventh Framework Programme (FP7/2007-2013) grant agreement 318306 (NEWCOM\#), from the FP7 CELTIC SPECTRA project, from Agence Nationale de la Recherche project ANR-IMAGENET, and from the Taiwan Ministry of Science and Technology under grants MOST-101-2923-E-009-001-MY3 and MOST 103-2221-E-009 -043 -MY3.

\end{document}